\documentclass[journal, lettersize]{IEEEtran}
\usepackage{amsmath,amsfonts}
\usepackage{amsthm}
\usepackage{amssymb}
\usepackage{array}
\usepackage{arydshln}
\usepackage{subcaption}
\usepackage{textcomp}
\usepackage{stfloats}
\usepackage{url}
\usepackage{verbatim}
\usepackage{graphicx}
\usepackage{cite}
\hyphenation{op-tical net-works semi-conduc-tor IEEE-Xplore}
\usepackage{upgreek}
\usepackage{multirow}
\usepackage{multicol}
\usepackage{numprint}
\usepackage{dsfont}
\usepackage{nomencl}
\usepackage{tikz}
\usepackage{xfrac}
\usepackage[shortlabels]{enumitem}

\newtheorem{theorem}{Theorem}
\newtheorem{lemma}{Lemma}
\newtheorem{proposition}{Proposition}
\newtheorem{corollary}{Corollary}

\hyphenation{geometry}

\captionsetup[figure]{font=small}

\usepackage{listings,xfp}
\usepackage{anyfontsize}

\npdecimalsign{\ensuremath{.}}

\setlength{\abovecaptionskip}{-5pt}
\setlength{\belowcaptionskip}{-5pt}

\makeatletter
{\small 
\xdef\f@size@small{\f@size}
\xdef\f@baselineskip@small{\f@baselineskip}
\normalsize 
\xdef\f@size@normalsize{\f@size}
\xdef\f@baselineskip@normalsize{\f@baselineskip}
}
\newcommand{\smalltonormalsize}{%
  \fontsize
    {\fpeval{(\f@size@small+\f@size@normalsize)/2}}
    {\fpeval{(\f@baselineskip@small+\f@baselineskip@normalsize)/2}}%
  \selectfont
}
\makeatother

\makeatletter
{\footnotesize 
\xdef\f@size@footnotesize{\f@size}
\xdef\f@baselineskip@footnotesize{\f@baselineskip}
\small 
\xdef\f@size@small{\f@size}
\xdef\f@baselineskip@small{\f@baselineskip}
}
\newcommand{\footnotesizetosmall}{%
  \fontsize
    {\fpeval{(\f@size@footnotesize+\f@size@small)/2}}
    {\fpeval{(\f@baselineskip@footnotesize+\f@baselineskip@small)/2}}%
  \selectfont
}
\makeatother

\makeatletter
{\scriptsize 
\xdef\f@size@scriptsize{\f@size}
\xdef\f@baselineskip@scriptsize{\f@baselineskip}
\footnotesize 
\xdef\f@size@footnotesize{\f@size}
\xdef\f@baselineskip@footnotesize{\f@baselineskip}
}
\newcommand{\scriptsizetofootnotesize}{%
  \fontsize
    {\fpeval{(\f@size@scriptsize+\f@size@footnotesize)/2}}
    {\fpeval{(\f@baselineskip@scriptsize+\f@baselineskip@footnotesize)/2}}%
  \selectfont
}
\makeatother

\makenomenclature

\begin{document}

\title{On the Uplink and Downlink EMF Exposure and Coverage in Dense Cellular Networks:  \\ A Stochastic Geometry Approach}
\author{Quentin~Gontier,~\IEEEmembership{Graduate Student Member,~IEEE,} Charles~Wiame,~\IEEEmembership{Member,~IEEE,}  
Joe~Wiart,~\IEEEmembership{Senior Member,~IEEE,} François~Horlin,~\IEEEmembership{Member,~IEEE,} Christo~Tsigros, Claude~Oestges,~\IEEEmembership{Fellow,~IEEE,} and  Philippe~De~Doncker,~\IEEEmembership{Member,~IEEE}
\thanks{This work was supported by Innoviris under the Stochastic Geometry Modeling of Public Exposure to EMF (STOEMP-EMF)~grant.}
\thanks{Q. Gontier, F. Horlin and Ph. De Doncker are with Universit\'e Libre de Bruxelles, OPERA-WCG, Avenue Roosevelt 50 CP 165/81, 1050 Brussels, Belgium (quentin.gontier@ulb.be).} 
\thanks{C. Wiame is with NCRC Group, Massachusetts Institute of Technology, Cambridge, MA 02139 USA.} 
\thanks{J. Wiart is with Chaire C2M, LTCI, Télécom Paris, Institut Polytechnique de Paris, 91120 Palaiseau, France.} 
\thanks{C. Tsigros is with Department Technologies et Rayonnement, Brussels Environment, Belgium.} 
\thanks{C. Oestges is with ICTEAM Institute, Université Catholique de Louvain, Louvain-la-Neuve, Belgium.} 
}

\maketitle
\begin{abstract}
Existing studies analyzing electromagnetic field (EMFE) in wireless networks have primarily considered downlink communications. In the uplink, the EMFE caused by the user's smartphone is usually the only considered source of radiation, thereby ignoring contributions caused by other active neighboring devices. In addition, the network coverage and EMFE are typically analyzed independently for both the uplink and downlink, while a joint analysis would be necessary to fully understand the network performance and answer various questions related to optimal network deployment. This paper bridges these gaps by presenting an enhanced stochastic geometry framework that includes the above aspects. The proposed topology features base stations modeled via a homogeneous Poisson point process. The users active during a same time slot are distributed according to a mixture of a Matérn cluster process and a Gauss-Poisson process, featuring groups of users possibly carrying several equipments. In this paper, we derive the marginal and meta distributions of the downlink and uplink EMFE and we characterize the uplink to downlink EMFE ratio. Moreover, we derive joint probability metrics considering the uplink and downlink coverage and EMFE. These metrics are evaluated in four scenarios considering BS, cluster and/or intracluster densifications. Our numerical results highlight the existence of optimal node densities maximizing these joint probabilities.
\end{abstract}

\begin{IEEEkeywords}
Cellular networks, EMFE, Gauss-Poisson process, Matérn cluster process, meta distribution, stochastic geometry, uplink.
\end{IEEEkeywords}

\section{Introduction}
\IEEEPARstart{O}{ptimizing} and characterizing the performance of large-scale wireless communication networks is significantly challenging due to randomness of the propagation channel and irregularities in the infrastructure topology. Studying the uplink (UL) adds another layer of complexity because of the stochastic nature of user equipment (UE) locations, the limited power resources of battery-powered UEs and the UL power control strategies. Stochastic geometry (SG) has emerged as a valuable mathematical tool for assessing the system-level performance of large-scale networks by addressing randomness in the placement of base stations (BSs) and UEs \cite{Baccelli1997StochasticGA, tutorial}. 

\smallskip 

In this paper, SG is therefore employed in both UL and downlink (DL) to calculate the statistical distribution of two critical metrics, especially in the case of network densification: the signal-to-interference-plus-noise ratio (SINR) and the electromagnetic field exposure (EMFE). The former is directly connected to the data rate experienced by the users, while the latter is an important subject of discussion in the advent of 5G \cite{5Gpovchiaraviglio}. The massive densification of the cellular infrastructure and the use of new frequency bands associated to this new technology must indeed respect legal requirements in many countries in terms of EMF radiation. To make the study as complete as possible, we are considering the case where users are grouped into clusters, each with the option of having several pieces of equipment.

Regarding the sources of radiation, one can first mention the DL EMFE coming from base stations. A second source of radiation arises from active radio devices present in the vicinity of the user different from the smartphone of that user for which EMFE is being computed. Considered together, these two first sources constitute what we refer to as \textit{electromagnetic (EM) pollution}, which is addressed by specific legal constraints. A third source of EMFE stems from the smartphone carried by the user. Interactions between the EMF and the user's body then occur in the near-field, requiring a distinct approach. Typically, measurements in this context are conducted on phantoms or involve specific numerical methods \cite{chiaraviglio2023dominance}. For the scope of our discussion, we will therefore not further delve into this last source of radiation. This paper focuses instead on the EM pollution faced by active or idle users which is subject to specific legislation for network providers. In contrast, the user's phone is subject to legislation that phone builders must comply with.

In SG, homogeneous Poisson point processes (H-PPPs) are commonly employed to model BSs thanks to their tractability, combined to a realistic representation of the network. The H-PPP being motion-invariant (stationary and isotropic), performance metrics can be evaluated in all generality for a typical UE located at the origin. However, when it comes to the UL, a different approach is required since it involves the analysis of two point processes (PPs): one for the BSs and another one for the UEs. While H-PPPs can still be used for modeling BSs, PPPs are insufficient for accurately modeling UEs for two reasons. First, UEs can often be organized in clusters because of people gathering for various reasons and, each person possibly carrying several equipments communicating with the BS. Second, PPPs do not account for the correlations between locations of UEs and their serving BSs. Of all UEs in the network, only a fraction is relevant for the study of UL SINR. Indeed, because of the absence of intracell interference, only one UE can use a specific resource block (RB) at a time in each BS cell. If the active UEs are distributed according to a H-PPP, the resulting process in called Poisson-Voronoï (PV), but it is intractable. Consequently, to answer these two issues, we introduce a mixture of a Matérn cluster process (MCP) and a Gauss-Poisson process (GPP) to represent the UEs active in a given time slot, which are relevant for the study of UL EMF EMFE. The MCP consists of a H-PPP parent process, each of them having a Poisson-number of offspring uniformly distributed inside a disk of given radius centered at the parent location. Each point of the MCP can then become parent and has one or two offspring, the first at the parent's location, the second at a specific distance from the parent. The fraction of UEs utilizing the same RBs, and therefore relevant for UL SINR, are modeled using the approximation \cite{Haenggi17} which is numerically shown to be also valid if the UEs are originally distributed as a MCP+GPP mixture instead of a H-PPP.

Leveraging SG, the primary objective of this study is to address the following questions:
\begin{enumerate}
    \item How large is the mean UL EMFE compared to mean DL EMFE? How does the ratio of these two quantities evolve with the densification of BS and/or UEs and in the presence of clusters?
    \item For what proportion of users is EMFE above a specific threshold for what part of the time?
    \item What is the likelihood of having at the same time the UL EMFE below a given limit and the UL SINR above a specific threshold? Knowing this, for a given demand in terms of the number of UEs, what BS density choosing and what type of BS deploy? Additionally, once the network is deployed, what range of UE densities can this network handle while ensuring good performance?
    \item For users with an UL SINR requirement met and experiencing EMFE below a limit, for what DL SINR threshold can users expect a 80\% probability to meet DL coverage requirements as well for diverse BS or UE densities?
\end{enumerate}

These questions hold great significance to network providers and government authorities who are faced to the interplay between network optimization and EMFE regulation. They will guide the reader through the paper. 

\subsection{Related Works}
This section summarizes existing works connected to this study. Table~\ref{tab:comparison} compares all references mentioned within the present paper. The associated state of the art is divided into two parts in the following paragraphs: the modeling of UL communications in SG analyses and the modeling of EMFE in both DL and UL.

\subsubsection{Stochastic geometry for the SINR assessment in the UL}
Originally, SG was predominantly used to assess performance in the DL, including metrics like SINR coverage, data rate, and outage probability. As explained, in the UL, two PPs are in the game and several solutions exist to extract performance metrics. Some authors have used independent H-PPPs for both BS and UE locations \cite{Herath15, Nikbakht20, Elshaer_2016, Herath18, uplink_first, Direnzouplink16, Martin-Vega16, ElSawy14}. To address the missing correlation in the model, \cite{Singh15} suggested approximating the PP of UEs with an equivalent inhomogeneous Poisson point process (I-PPP) featuring a decreasing exponential intensity function when viewed from the typical BS. Nevertheless, this approach introduces errors in performance metrics. In \cite{Haenggi17}, a semi-empirical intensity function based on the real distance distribution in the network, observed from either the typical UE or the typical BS, was proposed to mitigate these errors. This method is employed to evaluate the meta distribution of UL SINR in \cite{uplink_meta}.

It is worth noting that some studies, such as \cite{Herath15, Nikbakht20, Elshaer_2016, Herath18, Singh15, uplink_meta, Bai16}, do not impose constraints on the maximum UL transmit power. This lack of constraint results in an overall increase in both signal and interference, while the noise remains at a constant level. Noise-limited analyses then take away from reality, such as in \cite{Herath15, Nikbakht20, uplink_meta, Bai16}. The associated errors are examined in \cite{Elshaer_2016}. 

\subsubsection{Stochastic geometry for incident power density assessment}
Application of SG has expanded, now including the assessment of incident power density (IPD) in wireless power transfer systems \cite{HE3}. From IPD, both the amount of power that can be harvested wireless power transfer systems in the context of the Internet-of-Things \cite{HE3}, and EMFE, can be derived. The difference between the two arises from the fact that harvested power should be maximized while EMFE should be minimized. Some studies calculate a joint complementary cumulative distribution function (CCDF) to achieve a trade-off between coverage and harvested power, as seen in \cite{TVC_SWIPT} and \cite{SWIPT_MIMO}. Recently, researchers have started investigating the evaluation of IPD for public health concerns related to EMFE, as EMF-aware systems aim to ensure that IPD remains low enough to respect legal EMFE limits. The first attempt to model EMFE in SG can be found in \cite{app10238753}, where an empirical propagation model is used for a 5G mMIMO network in the millimeter wave (mmWave) band. In \cite{GontierAccess}, the theoretical distribution of EMFE is compared to an experimental distribution obtained from measurements in an urban environment. EMFE is analyzed while considering max-min fairness DL power control in a 5G mMIMO network in \cite{power_control}. Additionally, \cite{9511258} explores the impact of the coexistence of sub-6GHz and mmWave BSs in networks on the EMFE. The EMFE has also been numerically assessed in an indoor environment in \cite{Shikhantsov20}, by using a methodology akin to the SG framework. This involves obtaining a large number of ray-launching realizations by employing a randomized arrangement of scatterers for each realization. The joint study of SINR coverage and EMFE is introduced in \cite{GontierTWC} for $\beta$-Ginibre point processes and I-PPPs, followed by works such as \cite{manhattan} for Manhattan networks and \cite{cell-free} for user-centric cell-free mMIMO networks. As for the meta distribution of EMFE, it has only been analyzed, and only for the DL, in our previous work \cite{gontier2024meta}, which examined the EMFE experienced by a passive user in a cellular network utilizing dynamic beamforming. These works focus on the DL. 

Regarding the UL, an evaluation of the EMFE caused by the user's smartphone using SG is presented in \cite{chen2023joint} where the authors independently study the impact of the network parameters on the EMFE and the SNR under the assumption that the network is noise-limited. A more comprehensive investigation \cite{Qin24} extends the analysis to include EMFE from BSs and active users, while also considering the impact of EMFE limits on network coverage. In \cite{Pardo24}, resource allocation optimization to maximize the number of connected users to a BS is explored via MC simulations, taking into account coverage requirements and EMFE considerations. Values of the output power levels of 4G UEs have been collected in \cite{Joshi17}, leading to the conclusion that the mean transmit power of a UE in an LTE network is less than 1\% of the maximal transmit power (23~dBm) in an urban environment. Measurements of the EMFE from the user's smartphone have also been collected for a 5G network in \cite{chiaraviglio2023dominance}. 

To the best of the authors' knowledge, no study has introduced a mixture of two PPs to represent nodes in a network. Only \cite{Qin24} has evaluated the impact of UL EMFE caused by other UEs, by deriving its marginal distribution but not its meta distribution or a joint EMFE/SINR distribution.

{\small
\begin{table*}[t]
\begin{center}
\begin{tabular}{|c||c|c|c|c||c|c|c|c|}
    \hline
    \multirow{3}{*}{Ref.} & \multicolumn{2}{c|}{Model} & \multicolumn{2}{c||}{Topology} & \multicolumn{4}{c|}{Calculated performance metrics of interest} \\
    \cline{2-9}
    & \multirow{2}{*}{UL/DL}  & Max UL & PV & \multirow{2}{*}{MCP+GPP} & IPD & IPD & Joint DL  & Joint UL \\
    &  & power & approx. & mixture & marg. & meta & IPD-SINR & IPD-SINR \\
    \hline
    \cite{Herath15, Nikbakht20, Herath18} & UL &  &  &  &  &  &  & \\
    \hline
    \cite{Elshaer_2016} & UL+DL &  &  &  &  &  &  & \\
    \hline
    \cite{uplink_first, Direnzouplink16, Martin-Vega16, ElSawy14} & UL & \checkmark &  &  &  &  &  & \\
    \hline
    \cite{Singh15} & UL+DL &  & \checkmark &  &  &  &  & \\
    \hline
    \cite{uplink_meta} & UL &  & \checkmark$^*$ &  &  &  &  & \\
    \hline
    \cite{Bai16} & UL &  & \checkmark &  &  &  &  & \\
    \hline
    \cite{HE3, app10238753, GontierAccess} & DL &  &  &  & \checkmark &  &  &  \\
    \hline
    \cite{TVC_SWIPT, SWIPT_MIMO} & DL &  &  &  &  &  & \checkmark &  \\
    \hline
    \cite{power_control} & DL &  &  &  & \checkmark &  &  &  \\
    \hline
    \cite{9511258, GontierTWC, manhattan, cell-free} & DL &  &  &  & \checkmark &  &  &  \\
    \hline
    \cite{gontier2024meta} & DL &  &  & \checkmark & \checkmark &  &  &  \\
    \hline
    \cite{Chen23} & UL+DL & \checkmark &  &  &  &  &  &  \\
    \hline
    \cite{Qin24} & UL+DL & \checkmark & \checkmark &  & \checkmark &  &  &  \\
    \hline
    This work & UL+DL & \checkmark$^*$ & \checkmark & \checkmark &\checkmark  & \checkmark & \checkmark & \checkmark \\
    \hline
\end{tabular}
\end{center}
\caption{Comparison between the relevant literature of SG for wireless telecommunication networks and this work. $^{*}$: The semi-empirical model of \cite{Haenggi17} is used. "marg." stands for marginal CDF.}
\label{tab:comparison}
\end{table*}
}

\subsection{Contributions}
Motivated by these considerations, this paper aims to answering to the four general questions of the introduction by (i) modeling a network of Poisson-distributed BSs and UEs with UL power control distributed as a mixture of a MCP and a GPP, utilizing the approximation of the PV model from \cite{Haenggi17} to approximate interfering UEs in the UL; (ii) assess the relative magnitude of UL EMFE in comparison DL EMFE and evaluate performance metrics, including SINR coverage and EMFE in both UL and DL; and (iii) investigate network performance following a network densification, determining the range of BS densities required to maintain an efficient network in terms of coverage and EMFE. The specific metrics introduced and analyzed are as follows:
\begin{enumerate}
    \item \textit{UL EM pollution}: Based on existing literature models, we derive the following mathematical expressions for UL EMFE, which are then compared with DL EMFE:
    \begin{itemize}
        \item[i.] Mean of the UL EMFE;
        \item[ii.] Marginal cumulative distribution function (CDF) of UL EMFE and of (global UL and DL) EM pollution;
        \item[iii.] Meta distribution of the UL EMFE and of the (global UL and DL) EM pollution.
    \end{itemize}
    \item \textit{Joint performance}: The second contribution consists of the analysis of the network performance from the points of view of EM pollution, UL coverage and DL coverage simultaneously. In particular, the following metrics are derived:
    \begin{itemize}
        \item[iv.] Joint CDF of UEC (\underline{U}L \underline{E}MFE and UL \underline{C}overage);
        \item[v.] Joint and conditional CDF of EMP-UDC (\underline{EM} \underline{P}ollution, \underline{U}L coverage and \underline{D}L \underline{C}overage).
    \end{itemize}
    \item \textit{Numerical results}: Capitalizing on the above metrics, the network performance is studied for two densification scenarios: 
    \begin{itemize}
        \item[(a)] Densification of both BSs and UEs with a constant UE/BS density ratio
        \item[(b)] Densification of UEs only
    \end{itemize}
    Optimal node densities are obtained as a function of the required coverage and EMFE limits. 
\end{enumerate}

\subsection{Structure of the Paper}
The paper is organized as follows: Section~\ref{sec:model} introduces the network topology and the system model. Section~\ref{sec:analytical_results} provides mathematical expressions for the performance metrics using SG. Numerical validation and analysis of the performance metrics in case of densification are provided in Section\ref{sec:results}. Finally, conclusions are given in Section~\ref{sec:conclusion}.

\section{System Model}
\label{sec:model}
\subsection{Topology}
\label{ssec:topology}
Let $\mathcal{B} \in \mathds{R}^2$ be the two-dimensional area where the network under study is located. The BSs locations are modeled as $\Psi_{b} = \Psi \cup \{X_0\}$ where $\Psi = \{X_i\} \in \mathcal{B}$ is a H-PPP with density $\lambda^{b}$ and $X_0$ is the BS located at the origin, as represented by the red crosses in Fig.~\ref{fig:Uplink_network_scheme}. The Slivnyak theorem states that $X_0$ becomes the typical BS under expectation over $\Psi_{b}$. The BSs are assumed to have the same technology, belong to the same network provider and transmit using the same sub-6~GHz frequency band $f^{d}$ with a bandwidth $B^{d}$.

\begin{figure}
    \centering
    \includegraphics[width=0.9\linewidth, trim={5cm, 11.5cm, 4cm, 12cm}, clip]{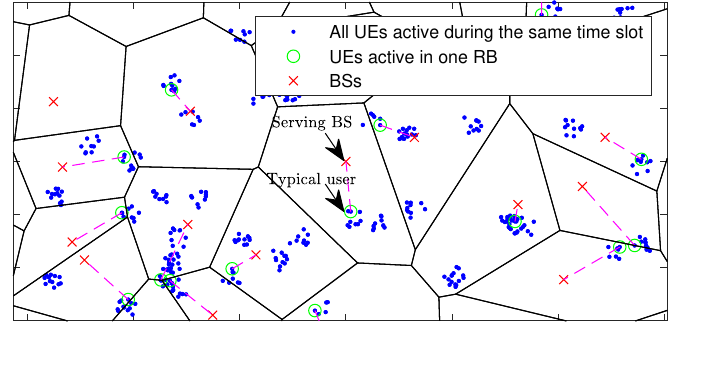}
    \caption{PV, model of type II \cite{Haenggi17}. The pink dashed lines correspond to the BS-UE association in one given RB. The black solid lines correspond to the edge of the Voronoï cells. The typical BS is located at the origin. 
    }
    \label{fig:Uplink_network_scheme}
\end{figure}
The entire UEs population scheduled during the same time slot is modeled as a mixture $\Phi_{u}$ of a MCP and a GPP, represented by the blue dots in Fig.~\ref{fig:Uplink_network_scheme}, and formerly defined in \cite{Azimi18} and \cite{Guo16}, respectively. The population of the grandparents is a H-PPP $\Phi^{[gp]}$ with density $\lambda^{[gp]}$. Each grandparent is the center of a cluster of a Poisson number of parents, uniformly distributed with density $\lambda^{[p]}$ within a radius $r_c$. At last, each parent has one or two offspring with probability $1-p^{[o]}$ and $p^{[o]}$, respectively. The first offspring is at the parent's location, the second, if it exists, is randomly distributed around the parent with some probability density function (PDF) $f^{[o]}$. To keep the UE's PP motion-invariant, $f^{[o]}$ only depends on the distance from the parent. To simplify calculations, $f^{[o]}$ is considered constant. The total density of UEs is then $\lambda^{u} = \lambda^{[gp]}\lambda^{[p]}(1+p^{[o]})\pi \left(r_c\right)^2$.

As this study is intended to be conservative, the typical user located at $Y_0 \in \Phi_u$ is assumed to be part of a cluster $C_0$ centered at $r^{[gp]}_0$. Its own equipments are not taken into account in the global EM pollution. The PDF of the distance $R_j^{[p]}$ between the typical user and a parent point in the cluster centered at $R^{[gp]}_i$, conditioned on $R^{[gp]}_i$, is given in \cite{qin23} by
{\small
\begin{multline}\label{eq:pdf_d_cluster}
    f_{R^{[p]}}(r_j^{[p]}|R_i^{[gp]}) = \\
    \begin{cases}
        2 r_j^{[p]}/r_c^2 &\!\text{if } 0 \leq r_j^{[p]} \leq r_c - R_i^{[gp]},\\
         \frac{2 r_j^{[p]}}{\pi r_c^2} \arccos\!\left(\!\frac{\left(r_j^{[p]}\right)^2+R_i^{[gp]}-r_c^2}{2r_j^{[p]}R_i^{[gp]}}\!\right)\! &\!\text{if } |r_c - R_i^{[gp]}| \leq r_j^{[p]} \leq r_c + R_i^{[gp]},\\
        0 &\!\text{otherwise.}
    \end{cases}
\end{multline}}

However, if all UEs in $\Phi_{u}$ contribute to UL EMFE, only a fraction of them uses the same subcarriers and is then relevant for studying UL SINR. Let's consider the specific subcarrier frequency $f^{u}$ with bandwidth $B^{u}$, not equal to $f^{d}$ in all generality, frequency division duplexing (FDD) being used for most technologies. Under the assumption of no intracell interference, as for the case of orthogonal frequency-division multiplexing (OFDM) signals, the only source of interference in the UL is the so-called intercell interference coming from other cells. The thinned PP of UEs using the same RBs is defined as $\Phi_{r} = \left\{Y_i \in \Phi_{u}: U(V(X_i) \cap \Phi_{u})\,\forall X_i \in \Psi_b\right\}$ \cite{Haenggi17} where $V(X)$ denotes the Voronoï cell of BS $X$, $U(V)$ with $V \in \mathcal{B}$ denotes one point chosen uniformly and randomly from $V$ and independently across different $V$. The resulting PP has a complex PV distribution. Cells with no active UE are called Crofton cells. The approximate density of $\Phi_{r}$ is $\lambda^{r} = \lambda^{b} (1-\nu)$ with $\nu = (\gamma/(\gamma+\delta))^\gamma$, $\delta = \lambda^{u}/\lambda^{b}$ and $\gamma = 7/2~$\cite{Haenggi17}. The set of interfering UEs is $\Phi_I = \Phi_{r} \backslash \{Y_0\}$.  The interfering UEs are represented by a green circle in Fig.~\ref{fig:Uplink_network_scheme}. A closest BS association policy is assumed. The link between UEs their serving BS is represented by the dashed pink line. 

Let us focus on the PPs $\Psi_b$ and $\Phi_{r}$. The typical UE located at $Y_0$ is served by the typical BS at $X_0$ and the distance between the two is $R_0 = ||Y_0||$. The distance between each UE $Y_i$ and its serving BS $X_i$ is denoted $R_i$ and the distance between the typical BS $X_0$ and an interfering UE $Y_i$ is denoted $D_i$. The distance between the typical UE and each BS $X_i$ is denoted $\rho_i$ and the distance between the typical UE and each UE $Y_i$ is denoted $\Tilde{D}_i$.  These notations can be seen in Fig.~\ref{fig:Sketch}.
\begin{figure}[!ht]
    \begin{center}
    \includegraphics[width=0.25\textwidth, trim={6cm, 22cm, 7cm, 1.5cm}, clip]{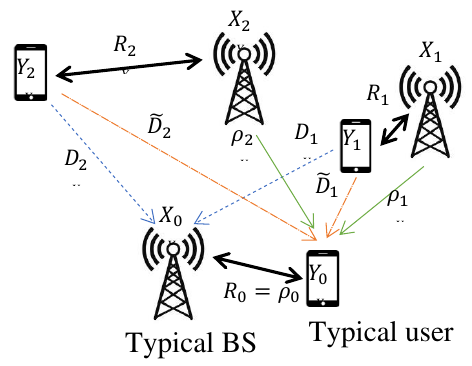}
    \end{center}
    \caption{Notations of the model. $X_i$ and $Y_i$ are respectively the locations of BS $i$ and UE $i$, with $X_0$ and $Y_0$ the typical BS-UE pair, $\rho_i$ is the distance between $X_i$ and $Y_0$, $D_i$ is the distance between $Y_i$ and $X_0$, $\Tilde{D}_i$ is the distance between $Y_i$ and $Y_0$ and $R_i$ is the distance between $X_i$ and $Y_i$.}
    \label{fig:Sketch}
\end{figure}

The spatial distribution of interfering UEs does not lead to tractable analytical expressions. In the case of a Poisson-distributed population of UEs, the model of type II \cite{Haenggi17} approximates the set of interfering UEs by an I-PPP with a density function depending on the distance to the typical BS, after fitting their pair-correlation function. The same technique applied to a population of UEs distributed with the mixture of a MCP and GPPs leads to the following density function:
For $r>0$,
\begin{multline}\label{eq:I_density_u}
    \lambda^{IU}(r) = \lambda^{r} \!\left(1-e^{-15.38 \sqrt{\lambda^{r}} r}+14.36 \lambda^{r} r^2 e^{-9.18 \lambda^{r} r^2}\right.\\
    \left.-13.95 \left(\lambda^{r}\right)^{3/1} r^3 e^{-5.02 \lambda^{r} r^2}\right).
\end{multline}
Similarly, the PDF of the distance between the typical BS and the typical UE can be approximated and is given by
\begin{equation}\label{eq:fR}
    f_{R_0}^\beta(r) = \beta\, 2\lambda^{b} \pi r e^{-\beta\, \lambda^{b} \pi r^2}, \ r > 0.
\end{equation}
with $\beta = 1.37$ for the mixture and $\beta = 1$ for a H-PPP. The associated CDF is
\begin{equation}\label{eq:FR}
    F_{R_0}^\beta(r) = 1- e^{-\beta\, \lambda^{b} \pi r^2}, \ r > 0.
\end{equation}

In the following, we will note $F_{R_0}^\beta(a, b) = F_{R_0}^\beta(b) - F_{R_0}^\beta(a)$. The link distances $R_y$ in the interfering cells are identically distributed as $R_0$, since all cells are statistically the same. However, they are not independent since the areas and shapes of neighboring cells are correlated, and $R_y$ cannot be larger than $D_y$. Hence we characterize the distribution of $R_y$ by conditioning on $D_y$. This results in the truncated Rayleigh distribution
\begin{equation}\label{eq:fRy}
    f_{R_y}(r|D_y) = \frac{f_{R_0}^\beta(r)}{f_{R_0}^\beta(D_y)}, \ 0 \leq r \leq D_y.
\end{equation}

All notations are listed in Table~\ref{tab:notations}.
{\small
\begin{table}[t]
\begin{center}
\begin{tabular}{|c|c|}
    \hline
    Notation & Meaning\\
    \hline
    $\Psi_{b}$, $\lambda^{b}$ & H-PPP of BSs, corresp. density\\
    $\Phi^{[gp]}$, $\lambda^{[gp]}$ & H-PPP of grandparent points, corresp. density\\
    $\Phi^{[p]}$, $\lambda^{[p]}$ & Set of parents inside a cluster, corresp. density\\
    $\Phi_{u}$, $\lambda^{u}$ & Population of active UEs, corresp. density \\
    \multirow{2}{*}{$\Phi_{r}$, $\lambda^{r}$} & Set of UEs using the same RBs,\\
    & corresp. approximate density  \\
    \multirow{2}{*}{$\Phi_{I}$, $\lambda^{IU}(r)$} & Set of UEs interfering in the UL,\\
    & corresp. density seen from $X_0$ \\
    $\rho_i$ & Distance between $Y_0$ and $X_i$ \\
    $\mathcal{B}(0,\tau)$ & 2D disk of radius $\tau$ \\
    $D_i$, $\Tilde{D}_i$ & Distance between $X_0$ and $Y_i$, between $Y_0$ and $Y_i$ \\
    $R_0$, $R_i$ & Distance between $X_0$ and $Y_0$, between $X_i$ and $Y_i$ \\
    $X_0$, $Y_0$ & Typical BS, typical UE \\
    $X_i \in \Psi_b$, $Y_i$ & BS labeled $i$, UE labeled $i$ \\
    \hdashline
    $\alpha$, $\kappa_{w}$ & PL exponent, $w$ PL intercept term \\
    $\upepsilon$ & FPC factor \\
    $\phi_{I_0^{d}}$ & CF of the DL interference seen from $X_0$\\
    $\phi_{\mathcal{P}^{w}}$ & CF of the $w$ EMFE\\
    $f^{w}$ & $w$ Carrier frequency with bandwidth $B^{w}$ \\
    $f^{[o]}$ & PDF of the distance between two offspring in a GPP\\
    $f_{R_0}^\beta\!(r)$, $F_{R_0}^\beta\!(r)$ & PDF of the distance $R_0$, corresp. CDF \\
    $f_{R_y}(r|D_y)$ & Conditional PDF of $R_i, i \neq 0$ \\
    $F_{R_0}^\beta(a,b)$ & $F_{R_0}^\beta(b)-F_{R_0}^\beta(a)$ \\
    $F^{w}_{\textrm{cov}}(T)$ & CCDF of the $w$ SINR \\
    $F^{w}_{\textrm{emfe}}(T_{e})$ & CDF of the $w$ EMFE \\
    $\mathcal{F}_{F_{\textrm{emfe}}}(T_{e}, s)$ & Meta distribution of EMFE\\
    $G_i^{b}$, $G_i^{u}$ & BS antenna gain, UE antenna gain \\
    $G^{u}(T_{c}^{u},T_e)$ & Joint CDF of the UEC \\
    $\!G(T_{c}^{u},T_{e}, T_{c}^d)\!$ & Joint CDF of EMP-UDC \\
    $|h_i^{w}|^2$ & $w$ channel fading gain \\
    $I_0^{w}$ & Aggregate $w$ interference \\
    $l_i^{d}$ & Channel power gain due to PL from $X_i$ to $Y_0$ \\
    $l_i^{u}$ & Channel power gain due to PL from $Y_i$ to $X_0$ \\
    $\Tilde{l}_i^{u}$ & Channel power gain due to PL from $Y_i$ to $Y_0$ \\
    $\mathcal{L}_{X}$ & Laplace transform of the random variable $X$ \\
    $N_{\sigma}$ & Thermal noise power \\
    $P_i^{w}$, $P_{i,r}^{w}$ & $w$ transmit EIRP, $w$ received power \\
    $P^{u}_{m}$ & Maximal UL transmit power \\
    $p^{[o]}$ & Probability that a parent has 2 offspring\\
    $r_e$ & Exclusion radius \\
    $r_c$ & Radius of each cluster of the MCP\\
    $S_0^{w}$ & Useful $w$ signal \\
    $z$ & Height difference between a BS and a UE \\
    $[f(x)]_{x=a}^{x=b}$ & $f(b)-f(a)$\\
    \hline
\end{tabular}
\end{center}
\caption{Notations used in the text. Notations above the dashed line are relative to network topology and distances. $w$ stands for $u$ (UL) or $d$ (DL).}
\label{tab:notations}
\end{table}
}

\subsection{Propagation Model}
\label{ssec:propagation_model}
The propagation model is defined as
\begin{equation}\label{eq:prop_UL}
    P_{i, r}^{u} = P_i^{u} G_i^{u} G_i^{b} |h_i^{u}|^2 l_i^{u} 
\end{equation}
for the UL and 
\begin{equation}\label{eq:prop_DL}
    P_{i, r}^{d} = P^{d} G_i^{b} G_i^{u} |h_i^{d}|^2 l_i^{d}
\end{equation}
for the DL, where $P_{i, r}^{u}$ (resp. $P_{i, r}^{d}$) is the UL (resp. DL) received power from UE $Y_i$ (resp. BS $X_i$), $P_i^{u}$ (resp. $P_{i}^{d}$) is the transmit power of UE $Y_i$ (resp. BS $X_i$), $G_i^{b}$ (resp. $G_i^{u}$) is the BS (resp. UE) antenna gain that is assumed omnidirectional, $|h_i^{u}|^2$ (resp. $|h_i^{u}|^2$) accounts for the UL (resp. DL) fading and
\begin{equation}
    l_i^{u} = l^{u}(D_i) = \kappa_u^{-1} \left(D_i^2+z^2\right)^{-\alpha/2},
\end{equation} 
\begin{equation}
    \Tilde{l}_i^{u} = \Tilde{l}^{u}(\Tilde{D}_i) = \kappa_u^{-1} \Tilde{D}_i^{-\alpha},
\end{equation} 
\begin{equation}
    l_i^{d} = l^{d}(\rho_i) = \kappa_d^{-1} \left(\rho_i^2+z^2\right)^{-\alpha/2},
\end{equation} 
are respectively the channel power gain due to path loss (PL) from UE $Y_i$ to typical BS $X_0$, from UE $Y_i$ to typical UE $Y_0$ and from BS $X_i$ to typical UE $Y_0$ with exponent $\alpha > 2$, $z > 0$ is the difference of height between $X_i$ and $Y_i$ (identical for all), $\kappa_d  \!= \! (4\pi f^{d}/c)^2$ and $\kappa_u  \!= \! (4\pi f^{u}/c)^2$ where $c$ is the speed of light. In the following, an abuse of notation will be used by writing $P_i^{x}$ instead of $P_i^{x} G_i^{x}$. We assume that $|h_i^{u}|^2$ and $|h_i^{d}|^2$ are uncorrelated due to fast time fluctuation of the channel. The same DL transmit power $P^{d}$ is assumed for all active BSs and the same antenna gains are assumed for UEs and BSs.

\subsection{Power control}
The 3rd Generation Partnership Project (3GPP) introduced fractional power control (FPC) for the Physical Uplink Shared Channel (PUSCH) to mitigate intercell UL interference and improve SINR coverage for cell-edge UEs \cite{R1-073224}. This FPC strategy, first proposed in \cite{Whitehead93} to reduce signal-to-noise ratio (SNR) variance and achieve target SNR, adjusts UE transmit power based on measurements of signal power by both the UE and the BS. The closed-loop FPC adjusts the fractional PL compensation (coefficient $\upepsilon \in [0,1]$) controlled by the network, resulting in a transmit power from UE $i$ given by
\begin{equation}
P_{i}^{u} = \min{\left(P^{u}_{0} l_i^{-\upepsilon}, P^{u}_{m}\right)}
\end{equation}
where $P^{u}_{0}$ represents the open-loop power whose range of values is calculated in \cite{R1-074850}. A maximum emitting power $P^{u}_{m}$ is considered in the model. Using FPC instead of full power control ($\upepsilon = 1$) allows higher PL users to operate at a lower SINR, reducing intercell interference. However, the optimal FPC coefficient must be found to balance a high SINR for cell-edge users against low intercell interference. Many studies have investigated the impact of FPC on network SINR, SNR, or signal-to-interference ratio (SIR), through simulations at the scale of one cell or at a larger scale modeling the network as an hexagonal grid \cite{Ubeda08, Simonsson08, Mullner09, Xiao06, Rao07}. They reveal that optimal values of $\upepsilon$ are between 0.4 and 0.6. To simplify the calculations, the emit power of UEs inside a given cluster are considered equal and the PL to compensate is the one of that a UE would have at the center of the cluster.

\subsection{Network Size}
The mathematical expressions derived in the following are defined for a circular area $\mathcal{B}$ of radius $\tau$ located in the $xy$ plane. The calculations take an exclusion radius $r_e \geq \min\left\{\kappa_u, \kappa_d\right\}^{-1/\alpha}$ into account, representing a non-publicly accessible area around the BSs and guaranteeing that the far-field conditions are met and to avoid singularities in the vicinity of the UE when calculating UL EMFE.

\subsection{Quantities of Interest}\label{ssec:quantities}

To simplify the notations, without loss of generality, a unitary UE gain $G^{u} \!=\! \numprint[dB]{0}$ will be assumed and $P_i^{u}$ and $P^{d}$ will be the UL and DL effective isotropic radiated power (EIRP), respectively. Let $S_0^{u} \! = \! P_0^{u}\,G^{b}\,|h_0^{u}|^2\,l_0^{u}$ be the useful UL received power, let $S_0^{d} \! = \! P^{d}\,|h_0^{d}|^2\,l_0^{d}$ be the useful DL received power, 
let $I_0^{u}  \!=  \!\sum_{i \in \Phi_I} P_i^{u}\,G^{b}\, |h_i^{u}|^2\, l_i^{u}$ be the aggregate UL interference and let $I_0^{d} \! = \! \sum_{i \in \Psi_I} P^{d}\, |h_i^{d}|^2\, l_i^{d}$ be the aggregate DL interference. Based on these definitions, the UL SINR at the typical BS, conditioned on the distance to its associated UE, is given by
\begin{align}
    \textrm{SINR}_{0}^{u} = \frac{S_0^{u}}{I_0^{u} + N_{\sigma}^{u}}
\end{align}
where $N_{\sigma}^{u}  \!= \! k\,B^{u}\,T\,\mathcal{F}^b$ is the thermal noise power with $k$ the Boltzmann constant, $T$ the standard temperature and $\mathcal{F}^b$ the BS noise figure. The DL SINR at the typical UE, conditioned on the distance to its associated UE, is given by
\begin{align}
    \textrm{SINR}_{0}^{d} = \frac{S_0^{d}}{I_0^{d} + N_{\sigma}^{d}}.
\end{align} 
where $N_{\sigma}^{d}  \!=  \!k\,B^{d}\,T\,\mathcal{F}^u$ with a UE noise figure $\mathcal{F}^u$. The UL EMFE experienced by the typical user and caused by all active UEs except the user's own smartphone is
\begin{align}\label{eq:exp_power_UL}
    \mathcal{P}^{u} = \sum_{i \in \Phi_{u} \backslash \{Y_0\}} P_i^{u}\,|\Tilde{h_i}^{u}|^2\, \Tilde{l}_i^{u}.
\end{align}
The fading coefficients $|h_i^u|^2$, $|h_i^d|^2$, $|\Tilde{h}_i^u|^2$ and $|\Tilde{h}_i^d|^2$ are all independent and identically distributed random variables. This independence stems from the application of frequency division duplexing and the poor correlation between channels at the utilized frequencies. The fading coefficients follow an exponential distribution with unit mean. For simplicity in the notation, they will be written $|h_i|^2$ without distinction in the following. To simplify the notations, we will write $\Bar{S}^{d}(r)  \!=  \!P^{d}\, l^{d}(r)$ and $\Bar{S}^{u}(r_1,r_2)  \!=  \!P^{u}(r_1)\,G^{b}\, l^{u}(r_2)$ in the following.
The DL EMFE experienced by the typical user and caused by the BSs of the network is
\begin{align}\label{eq:exp_power_DL}
    \mathcal{P}^{d} = \sum_{i \in \Psi_{b}} P^{d}\,|\Tilde{h_i}^{d}|^2\, l_i^d.
\end{align}

The power EMFE can be converted into a total IPD as
\begin{equation}\label{eq:IPD_def}
    \mathcal{S}^w = \kappa/(4\pi)\mathcal{P}^w 
\end{equation}
for $w = {u,d}$ and, finally, into a root-mean-square electric field strength in V/m as
\begin{equation}\label{ex:V/m}
    E[\textrm{V/m}] = \sqrt{120 \pi \mathcal{S}}.
\end{equation}

\section{Mathematical Results}
\label{sec:analytical_results}
\subsection{Analysis of EM Pollution}
\label{ssec:EMF_pollution}

The CDF of DL EMFE has already been calculated in \cite{GontierAccess} using Gil Pelaez' theorem and is therefore given as a final result in Lemma~\ref{lem:DL_EMF}.
\begin{lemma}\label{lem:DL_EMF}
    The CDF of the DL EMFE seen by the typical user is

\begin{align}
\begin{split}
    F_{\textrm{emfe}}^{d}(T_{e}) &= \mathbb{P}\left[\mathcal{P}^{d} < T_{e}\right] \\
    &= \frac{1}{2} - \int_0^\infty \!\textrm{\normalfont Im}\left[\phi_{\mathcal{P}^{d}}(q) e^{-j q T_{e}}\right]\frac{1}{\pi q} \, dq
\end{split}
\end{align}
where $\phi_{\mathcal{P}^{d}}(q)$ is the DL EMFE characteristic function (CF)
{\small
\begin{multline}
    \phi_{\mathcal{P}^{d}}(q) = \mathbb{E}\left[\exp\left(-j q \left(\mathcal{P}^{d}\right)\right)\right]\\
    = \exp\left(\pi \lambda^b \!\left[\left(x^2+z^2\right) _2F_1\left(1, \frac{2}{\alpha}, 1+\!\frac{2}{\alpha}, \frac{-j}{q \Bar{S}^{d}(x)}\right)\right]_{x = r_e}^{x = \tau}\right)\!.
\end{multline}}
\end{lemma}

The passive UL EMFE, caused by all users emitting simultaneously in the network, can be derived similarly using the Gil-Pelaez inversion theorem. It requires the knowledge of the CF of the passive UL EMFE. It is given in Theorem~\ref{eq:expucdf}.

\begin{theorem}\label{eq:expucdf}The CDF of UL EMFE seen by the typical user in a network where UEs are distributed according to a MCP+GPP mixture for the propagation model in \eqref{eq:prop_UL}, without considering its own equipment, is
\begin{align}
    F_{\textrm{emfe}}^{u}(T_{e}) &= \mathbb{P}\left[\mathcal{P}^{u}\! < T_{e}\right]= \frac{1}{2} - \! \int_0^\infty \!\textrm{\normalfont Im}\left[\phi_{\mathcal{P}^{u}}(q) e^{-j q T_{e}}\right]\frac{1}{\pi q} dq 
\end{align}
where $\phi_{\mathcal{P}^{u}}(q) = \phi_{\mathcal{P}^u}^{intra}(q) \phi_{\mathcal{P}^u}^{inter}(q)$,
{\small
\begin{equation}\label{eq:phi_P_inter}
    \phi_{\mathcal{P}^u}^{inter}(q) = \exp\left(2\pi\lambda^{[gp]} \int_{r_c}^{\tau} \left(G_{\textrm{MCP}}^{[x]}(v)-1\right)x\,dx\right),
\end{equation}}
{\small
\begin{equation}\label{eq:phi_P_intra}
    \phi_{\mathcal{P}^u}^{intra}(q) = \int_{-r_c}^{r_c} G_{\textrm{MCP}}^{[x]}(v) f_{R^{[p]}}(x|x < r_c) dx-e^{-\lambda^{[p]}\pi r_c^2},
\end{equation}}
{\smalltonormalsize
\begin{equation}
    G_{\textrm{MCP}}^{[x]}(v) = \!\int_{r_e}^{\tau}G_{\textrm{MCP}}^{[x]}(v|r)f_{R_0}^1(r)dr,
\end{equation}}
{\footnotesizetosmall
\begin{equation}
    G_{\textrm{MCP}}^{[x]}(v|R) \!= \!\exp\!\left(\!\lambda^{[p]}\pi r_c^2\!\left(\int_0^{r_c+x}\!G_{\textrm{GPP}}^{[u]}(v|R) f_{R^{[p]}}(u|x)du\!-\!1\right)\!\right),
    \end{equation}}
    $G_{\textrm{GPP}}^{[u]}(v|R) = (1-p^{[o]})v(q, u|R)+p^{[o]}v(q, u|R)\int_{-r_c}^{r_c}v(q, u+y|r)f^{[o]}(y)dy$ and $v(q, u|R) = (1-j q P^{u}(r) \Tilde{l}^u(u))^{-1}$.
\end{theorem}
\begin{proof}
    The proof of the CF of UL EMFE is provided in Appendix~\ref{sec:expucdfproof}.

\end{proof}
The previous mathematical expressions considered UL and DL EMFEs separately. Calculations can also be made for the EM pollution considering both UL and DL. The total EM pollution is given by the sum of UL and DL EMFEs. In our model, both sources of EMFE can be seen as totally independent sources.

\begin{theorem}\label{th:exptot}The CDF of total EM pollution experienced by the typical user in a network where BSs are distributed according to a H-PPP and UEs are distributed according to a MCP+GPP mixture for the propagation model in \eqref{eq:prop_UL} and in \eqref{eq:prop_DL}, without considering the EMFE of its own equipment is
\begin{multline}
\begin{split}
    &F_{\textrm{emfe}}^{tot}(T_{e}) = \mathbb{P}\left[\mathcal{P}^{d}+\mathcal{P}^{u} < T_{e}\right]\\
    &\quad= \frac{1}{2} - \int_0^\infty \textrm{\normalfont Im}\left[\phi_{\mathcal{P}^{d}}(q)\, \phi_{\mathcal{P}^{u}}(q) \, e^{-j q T_{e}}\right]\,\frac{1}{\pi q} \, dq 
\end{split}
\end{multline}
\end{theorem}
\begin{proof}
    Theorem~\ref{th:exptot} is derived using the Gil-Pelaez theorem. The PPs $\Psi_b$ and $\Phi_u$ are independent. Although $\mathcal{P}^{d}$ is independent of $\Phi_u$, $\mathcal{P}^{u}$ does exhibit dependence on $\Psi_b$ due to power control mechanisms. This correlation is inherently accounted for through the PDF \eqref{eq:fR} approaching the distance between each UE and its serving BS, as already done in Appendix~\ref{sec:expucdfproof}. By  incorporating this dependence within the PDF, $\mathcal{P}^{d}$ and $\mathcal{P}^{u}$ can be treated as independent in the analysis. From the definition of the CF, we have
    {\smalltonormalsize
    \begin{align}
    \begin{split}
        &\phi_{\mathcal{P}^{tot}}(q) = \mathds{E}\left[e^{j q \mathcal{P}^{tot}}\right] = \mathds{E}\left[\exp\left(j q \left(\mathcal{P}^{d}+\mathcal{P}^{u}\right)\right)\right]\\
        \quad&= \mathds{E}\left[\exp\left(j q \mathcal{P}^{d}\right)\exp\left(j q \mathcal{P}^{u}\right)\right] = \phi_{\mathcal{P}^{d}}(q)\, \phi_{\mathcal{P}^{u}}(q).
    \end{split}
    \end{align}}
\end{proof}
The mean IPD has already been calculated in \cite{9511258} and more specifically for the DL EMFE in \cite{GontierAccess} for a H-PPP, then for other PPs in \cite{GontierTWC}, using the Campbell theorem. An alternative method of calculation consists of taking the derivative of the CF, $\Bar{\mathcal{P}} = \frac{d \phi_{\mathcal{P}}(j t)}{dt}\Large|_{t=0}$. As the two alternatives do not present new challenges, the final result is given below as a corollary. 

\begin{corollary}\label{cor:mean_tot}
    The mean total EM pollution experienced by the typical user in a network where BSs are distributed according to a H-PPP and UEs are distributed according to a MCP+GPP mixture for the propagation model in \eqref{eq:prop_UL} and in \eqref{eq:prop_DL}, without considering the EMFE of its own equipment is $\Bar{\mathcal{P}^{tot}} = \Bar{\mathcal{P}^u} + \Bar{\mathcal{P}^d}$
    where
    \begin{equation}\label{eq:mean_exp_DL}
        \Bar{\mathcal{P}^d} = P^{d}\, 2\pi\lambda^{r} \frac{l^d(r_e)(r_e^2+z^2)-l^d(\tau)(\tau^2+z^2)}{\alpha-2};
    \end{equation}
    {\smalltonormalsize
    \begin{multline}\label{eq:mean_exp_UL}
        \Bar{\mathcal{P}^u} = 2\pi^2\lambda^{[gp]}\lambda^{[p]} r_c^2 \Bar{P^u} \int_{r_e}^\tau x \int_{0}^{r_c+x} \left(\Tilde{l}^u(u)\right.\\
        \left.+p^{[o]} \int_{-r_c}^{r_c} \Tilde{l}^u(u+y)f^{[o]}(y)dy\right) f_{R^{[p]}}(u|x) du dx,
    \end{multline} }
     $\Bar{P^u} = \int_{r_e}^\tau P^u(r)f_{R_0}^1(r) dr = \int_{r_e}^{r_{m}} P^{u}_{0} l^{-\upepsilon}(r)\,f_{R_0}^1(r)dr
        +P^{u}_{m} \, F_{R_0}^1(r_m, \tau) = P_{FPC}
        +P^{u}_{m} \, F_{R_0}^\beta(r_e, r_{m})$ with 
    \begin{multline}
        P_{FPC} = P_0^{u}\pi \lambda^{b} \beta e^{\beta z^2}\\
        \times \left[l^{-\upepsilon}(x)\left(x^2+z^2\right) E_{\frac{-\alpha\upepsilon}{2}}\left(\pi \lambda^{b} \beta\left(x^2+z^2\right)\right)\right]_{x=r_m}^{x=r_e}
    \end{multline}
    where $E_n(x)$ is the exponential integral function
    \begin{equation}
        E_n(x) = \int_1^\infty e^{-x t}/t^n dt
    \end{equation}which converges and $[f(x)]_{x=a}^{x=b} = f(b)-f(a)$.
\end{corollary}
It is worth noting that if the UEs are distributed according to a H-PPP, the mean UL EMFE simplifies to $\Bar{\mathcal{P}^u} = 2\pi\lambda^{u} \Bar{P^u}\frac{\Tilde{l}^{u}(r_e)r_e^2-\Tilde{l}^{u}(\tau)\tau^2}{\alpha-2}$. 

\subsection{Meta Distribution of EMFE}

The meta distribution of EMFE provides a comprehensive statistical characterization of the variability in EMFE levels experienced by users in a wireless network. Unlike the traditional CDF, the meta distribution delves into the distribution of individual users' EMFE, distinguishing the sources of EMFE variability and therefore offering insights into the tail behavior. In the model of this paper, the sources of variability are (i) the BS locations, (ii) the cluster locations, (iii) the location of UEs inside clusters and (iv) time fading in the propagation channel. In a real-world network, BS locations are fixed and cluster locations are less likely to fluctuate as fast as sources (iii) and (iv). Since fluctuations in the PL attenuation caused by sources (i) and (ii) only depend on the typical user's location, the meta distribution will enable to distinguish sources (i) and (ii) from sources (iii) and (iv). In other words, the CDF of EMF conditioned on one PP realization of BSs $\Psi_b$ and cluster centers $\Phi^{[gp]}$ is introduced, and denoted as $F_{\textrm{emfe}}(T_e|\Psi_b, \Phi^{[gp]}) = \mathbb P\left[\mathcal{P}^{tot} < T_e|\Psi_b, \Phi^{[gp]}\right]$. For $s \in [0, 1]$, the meta distribution of EMFE is then defined as
\begin{equation}\label{eq:meta_def}
\mathcal{F}_{F_{\textrm{emfe}}^{tot}}(T_{e}, s) = \mathbb P_{\Psi_b, \Phi^{[gp]}} \left[F_{\textrm{emfe}^{tot}}(T_e|\Psi_b, \Phi^{[gp]}) > s\right].
\end{equation}
Here, $\mathbb P_{\Psi_b, \Phi^{[gp]}}[\cdot]$ signifies that the probability operator is taken over all spatial realizations of the PP. When considering a large network, selecting a new realization of the PP around the centric PU at (0,0) is akin to observing the realization of the PP from the point of view of a distant PU in the network. Consequently, employing $\mathbb{P}_{\Psi_b, \Phi^{[gp]}}$ essentially evaluates performance for numerous users uniformly distributed across the network. In terms of interpretation, $\mathcal{F}_{F_\textrm{emf}^{tot}}(T_{e}, s) = x$\% means that $x$\% of PUs in the network experience an EMFE below the threshold $T_e$ for at least a fraction $s$ of the time. 

\begin{lemma}\label{lem:CDF_cond}
    The CDF of total EM pollution conditioned on the realizations $\Psi_b$ and $\Psi^{[gp]}$ for the propagation models in \eqref{eq:prop_DL} and \eqref{eq:prop_UL} can be written by
{\smalltonormalsize
\begin{multline}\label{eq:CDF_cond}
    F_{\textrm{emfe}^{tot}}\left(T_{e}|\Psi_b, \Phi^{[gp]}\right) \\
    = \frac{1}{2}-\int_{0}^{\infty} \!\textrm{\normalfont Im}\left[\phi_{\mathcal{P}^{tot}}(q|\Psi_b, \Phi^{[gp]})\,e^{-jqT_{e}}\right]\!\frac{1}{\pi q}\,dq
\end{multline}}
where $\phi_{\mathcal{P}^{tot}}(q|\Psi_b, \Phi^{[gp]})\! =\! \phi_{\mathcal{P}^{d}}(q|\Psi_b) \phi_{\mathcal{P}^{u}}(q|\Psi_b, \Phi^{[gp]})$,
{\smalltonormalsize
\begin{equation}\label{eq:cf_pd_cond}
    \phi_{\mathcal{P}^{d}}(q|\Psi_b) = \prod_{i \in \Psi_b}\left(1- j q P^d l_i^d\right)^{-1},
\end{equation}}
{\smalltonormalsize
\begin{multline}\label{eq:cf_pu_cond}
    \phi_{\mathcal{P}^{u}}(q|\Psi_b, \Phi^{[gp]}) =\! \left(G_{\textrm{MCP}}^{[R^{[gp]}_0]}(v) -e^{-\lambda^{[p]}\pi r_c^2}\right)\! \\
    \times\prod_{i \in \Phi^{[gp]} \setminus{R^{[gp]}_0}}\! G_{\textrm{MCP}}^{[X_i]}(v|R_i)
\end{multline}}
\end{lemma}
\begin{proof}
The expression in \eqref{eq:CDF_cond} is derived by applying Gil-Pelaez' theorem. The CFs are obtained following Appendix~\ref{sec:expucdfproof}. 
\end{proof}

The meta distribution of the EMFE \eqref{eq:meta_def} can be rewritten as $\mathbb P_\Psi \left[\ln(F_{\textrm{emfe}}^{tot}(T_e|\Psi_b, \Phi^{[gp]})) > \ln(s)\right]$. Using the Gil-Pelaez theorem, it follows that 
{\smalltonormalsize
\begin{equation}\label{eq:meta_GP}
    \mathcal{F}_{F_{\textrm{emfe}}^{tot}}(T_{e}, s) = \frac{1}{2}+\frac{1}{\pi} \int_0^{\infty} \textrm{\normalfont Im}\left[e^{-j q \ln(s)}\mathcal{M}_{jq}(T_e)\right] q^{-1}\,dq
\end{equation}}where $\mathcal{M}_{jq}(T_e) = \mathbb E_{\Psi_b, \Phi^{[gp]}}\left[\left(F_{\textrm{emfe}^{tot}}(T_e|\Psi_b, \Phi^{[gp]})\right)^{jq}\right]$ is the moment of order $jq$ of the conditional CDF of EMFE. The moments are not mathematically tractable for any value of~$q$. Although an exact derivation of the meta distribution of EMFE is not possible, a close approximation is provided by the beta distribution, leveraging the methodology of \cite{uplink_meta}, based on the knowledge of the first two moments of the conditional CDF, for which a tractable expression can be found.
\begin{proposition}\label{prop:beta_approx}The meta distribution of EMFE in \eqref{eq:meta_def} can be approximated by
{\smalltonormalsize
\begin{equation}\label{eq:approx_beta}
    \Tilde{\mathcal{F}}_{F_{\textrm{emfe}}^{tot}}(T_{e}, s) = 1-\mathcal{I}_s\left[\Tilde{\alpha}_{\mathcal{S}}(T_e), \Tilde{\beta}_{\mathcal{S}}(T_e)\right]
\end{equation}}
where $\mathcal{I}_s(a,b) = \frac{B(s;a,b)}{B(a,b)}$ is the regularized incomplete beta function, $B(s;a,b) = \int_0^s t^{a-1}(1-t)^{b-1}$ is the incomplete beta function, $B(a,b) = B(1;a,b)$ the beta function and $\Tilde{\alpha}_{\mathcal{S}}(T_e)$ and $\Tilde{\beta}_{\mathcal{S}}(T_e)$ are the parameters of the beta distribution obtained via moment matching. Under the condition $\mathcal{M}_{1}(T_e)< \mathcal{M}_{2}(T_e)$, their expressions are given by
{\smalltonormalsize
\begin{equation}
    \Tilde{\alpha}_{\mathcal{S}}(T_e) = \mathcal{M}_{1}(T_e) \left(\frac{\mathcal{M}_{1}(T_e)\left(1-\mathcal{M}_{1}(T_e)\right)}{\mathcal{M}_{2}(T_e)-\mathcal{M}_{2}^2(T_e)}-1\right);
\end{equation}}
{\smalltonormalsize
\begin{equation}
    \Tilde{\beta}_{\mathcal{S}}(T_e) = \left(1-\mathcal{M}_{1}(T_e)\right) \left(\frac{\mathcal{M}_{1}(T_e)\left(1-\mathcal{M}_{1}(T_e)\right)}{\mathcal{M}_{2}(T_e)-\mathcal{M}_{2}^2(T_e)}-1\right).
\end{equation}}
\end{proposition}

The first-order moment is $\mathcal{M}_{1}(T_e)\! = \!F_{\textrm{emfe}}^{tot}(T_e)$ given in Theorem~\ref{th:exptot}, which is directly obtained from $\phi_{\mathcal{P}^{d}}(q) \!= \!\mathbb{E}_{\Psi_b}\left[\phi_{\mathcal{P}^{d}}(q|\Psi_b)\right]$ and $\phi_{\mathcal{P}^{u}}(q)\! =\! \mathbb{E}_{\Psi_b, \Phi^{[gp]}}\left[\phi_{\mathcal{P}^{u}}(q|\Psi_b, \Phi^{[gp]})\right]$. The second-order moment is given in Theorem~\ref{th:M2}.

\begin{theorem}\label{th:M2} The second-order moment of the conditional distribution of EMFE \eqref{eq:CDF_cond} is given by
{\smalltonormalsize
\begin{equation}\label{eq:M2}
    \mathcal{M}_{2}(T_{e}) = -{1}/{4} + F_\textrm{emfe}^{tot}(T_{e})
    + {\pi^{-2}}\, {\Omega(T_e)}
\end{equation}}
where 
{\smalltonormalsize
\begin{equation}\label{eq:Omega}
    \Omega(T_e) = \int_{0}^{\infty}\int_{0}^{\infty}{\omega(T_e, q, q')}\,q^{-1}\,q'^{-1}\,dq\,dq',
\end{equation}}
{\small
\begin{equation}
        \omega(q, q';T_{e}) = \frac{1}{2}\,\textrm{\normalfont Re}\!\left[\gamma_-(q, q')\,e^{-jT_{e}(q-q')}\!+\!\gamma_+(q, q')\,e^{-jT_{e}(q+q')}\right],
\end{equation}}$\gamma_\pm(q, q') = \gamma_+^d(q, \pm q') \gamma_+^u(q, \pm q')$\!,
{\small
\begin{multline}
    \gamma_+^d(q, q') \\
        = \exp\!\left(2\pi \lambda^b\int_{r_e}^{\tau}\left(\frac{1}{1- j q P^d l(x)^d}\frac{1}{1- j q' P^d l(x)^d}-1\right)dx\right),
\end{multline}}
{\small
\begin{multline}\label{eq:gamma+u}
    \gamma_+^u(q, q') = e^{-\lambda^{[p]}\pi r_c^2} \phi_{\mathcal{P}^u}^{inter}(q,q')\\
    \times \left(\phi_{\mathcal{P}^u}^{intra}(q,q')-\phi_{\mathcal{P}^u}^{intra}(q)-\phi_{\mathcal{P}^u}^{intra}(q')\right)
\end{multline}}
where $\phi_{\mathcal{P}^u}^{inter}(q,q')$ and $\phi_{\mathcal{P}^u}^{intra}(q,q')$ are defined similarly than $\phi_{\mathcal{P}^u}^{inter}(q)$ in \eqref{eq:phi_P_inter} and $\phi_{\mathcal{P}^u}^{intra}(q)$ in \eqref{eq:phi_P_intra} by replacing $G_{\textrm{GPP}}^{[u]}(v(q,u,R))$ by $G_{\textrm{GPP}}^{[u]}(v(q,u,R)) + G_{\textrm{GPP}}^{[u]}(v(q',u,R))$.
\end{theorem}
\begin{proof}
    The proof is provided in Appendix~\ref{sec:M2_proof}.
\end{proof}

\subsection{Joint UL EMFE/Coverage Performance}
The third part of this study involves an analysis of UL performance, considering both EMFE and coverage.  Not all UEs used to calculate the UL EMFE must be taken into account for the study of the coverage because only a subset uses the same subframes as explained in Subsection~\ref{ssec:topology}. Unlike the DL, EMFE and SINR are calculated at different points: The study of UL coverage is conducted from the perspective of the serving BS, while UL EMFE is studied from the point of view of the typical user. It is therefore adequate to make the approximation that UL EMFE and UL SINR are independent, with a small correction consisting in replacing the PP of interfering UEs by an I-PPP with density function~\eqref{eq:I_density_u} which accounts for their PV distribution. The joint CDF of the UEC is then given by the product of the EMFE CDF and SINR CCDF, as shown in Theorem~\ref{th:jointUL}. The error made will be analyzed in Subsection~\ref{ssec:jointUL}.

\begin{theorem}\label{th:jointUL} The joint CDF of UEC in a network where BSs are distributed according to a H-PPP and UEs are distributed according to a MCP+GPP mixture for the propagation model in \eqref{eq:prop_UL} and in \eqref{eq:prop_DL} is
\begin{align}
\begin{split}
        G^{u}(T_{c}^{u}, T_{e}) &= \mathds E_{R_0}\left[\mathbb{P}\left[\text{\normalfont{SINR}}^{u}_0 > T_{c}^{u}, \mathcal{P}^{u} < T_{e}\right]|R_0\right]\\
        &= F_{\textrm{emfe}}^{u}(T_{e}) \times F_{\textrm{cov}}^{u}(T_{c}^{u})
\end{split}
\end{align}
    where
{\smalltonormalsize
\begin{equation}\label{eq:cdf_cov_u}
        F^{u}_{\textrm{cov}}(T_{c}^{u}) \triangleq \mathds E_{R_0}\left[F^{u}_{\textrm{cov}}(T_{c}^{u}|R_0)\right] =\! \int_{r_e}^{\tau} \!\!F^{u}_{\textrm{cov}}(T_{c}^{u}|r_0) f_{R_0}^\beta(r_0) dr_0,
\end{equation}\vspace{-1em}
\begin{align}
\begin{split}
    &F^{u}_{\textrm{cov}}(T_{c}^{u}|R_0) \triangleq \mathbb{P}\left[\textrm{\normalfont SINR}_0^{u} > T_{c}^{u}\right]\\
    &\quad= \mathcal{L}_{N_{\sigma}^{u}}\left(\frac{T_{c}^{u} }{\Bar{S}^{u}(R_0, R_0)}\right) \!\mathcal{L}_{I_0^{u}}\left(\frac{T_{c}^{u}}{\Bar{S}^{u}(R_0,R_0)}\Big|R_0\!\right),
\end{split}
\end{align}\vspace{-1em}
\begin{multline}
    \mathcal{L}_{I_0^{u}}(s|R_0) \\
    = 
    \exp\left(\int_{R_0}^\tau \!\!\left(\int_{r_e}^{r}\!\!\frac{f_{Ry}(v|r)}{1+s \Bar{S}^{u}(v,r)} dv\!-\!1 \right)2\pi \lambda^{IU}(r)r dr\right)
\end{multline}}
and $\mathcal{L}_{N_{\sigma}^{u}}(s) = e^{-s\,N_{\sigma}^{u}}$.
\end{theorem}
\begin{proof}
    The independence between $\text{\normalfont{SINR}}^{u}_0$ and $\mathcal{P}^{u}$ allows one to express the joint CDF as the product between the CDF of UL EMFE and the CCDF of the UL SINR. This latter term, denoted as $F_{\textrm{cov}}^{u}$ in the equations, is obtained in a similar fashion to what was used for the meta distribution of UL SIR in \cite{uplink_meta}, the CCDF of SIR in \cite{Singh15} or the CCDF of SINR\cite{Bai16}. The differences between the current model and the cited works is the maximal power constraint in the FPC strategy and a different inhomogeneous density function \eqref{eq:I_density_u}.
\end{proof}
Note that because of the maximal power constraint, $\int_{r_e}^{r}\!\!\frac{f_{Ry}(v|r)}{1+s \Bar{S}^{u}(v,r)} dv$ can be written $\int_{r_e}^{r}\!\!\frac{f_{Ry}(v|r)}{1+s \Bar{S}^{u}(v,r)} dv\, \mathds 1\left[r \leq r_{m}\right]+ \left[\int_{r_e}^{r_{m}}\!\!\frac{f_{Ry}(v|r)}{1+s \Bar{S}^{u}(v,r)} dv + \frac{F_{R_0}^\beta(r_{m}, r)/F_{R_0}^\beta(r)}{1+s \Bar{S}^{u}(r_{m},r)}\right]\mathds 1\left[r \geq r_{m}\right]$.

\subsection{Joint EM Pollution/UL-DL Coverage Performance}
The forth part of the study consists of an analysis of the network performance, both for the UL and the DL. Compared to the previous subsection, DL EMFE and DL SINR are dependent because they are calculated at the same point and DL interference is equal to the DL intercell EMFE. The joint CDF of EMP-UDC is then given in Theorem~\ref{th:joint_all}.
\begin{theorem}\label{th:joint_all}The joint  CDF of EMP-UDC in a network where BSs are distributed according to a H-PPP and UEs are distributed according to a MCP+GPP mixture for the propagation model in \eqref{eq:prop_UL} and in \eqref{eq:prop_DL} is given by
{\small
\begin{align}\label{eq:EMPUDC}
\begin{split}
    &G(T_{c}^{u},T_{e},T_{c}^{d})\\
    & \triangleq \mathds E_{R_0}\left[\mathbb{P}\left[\text{\normalfont{SINR}}^{u}_0 > T_{c}^{u}, \mathcal{P}^{tot} < T_{e}, \text{\normalfont{SINR}}^{d}_0 > T_{c}^{d}\right]\Bigg|R_0\right]\\
    &= \int_{r_e}^\tau \!\! F^{u}_{\textrm{cov}}(T_{c}^{u}|r_0)  M(T_{e},T_{c}^{d}|r_0)  f_{R_0}^\beta(r_0) dr_0
\end{split}
\end{align}}
where
\begin{multline}
    M(T_{e},T_{c}^{d}|R_0) = 
    \left[ \frac{1}{2} - \frac{1}{2}  \exp\left(\frac{T_{e}}{P^{d}\,l^{d}(R_0)}\right)\right.\\
    \left. - \int_{0}^{\infty} \frac{1}{\pi q}\textrm{\normalfont Im}\left[ \phi_{I_0^{d}}(q|R_0)\,\zeta(q, T_{c}^{d},T_{e},l^{d}(R_0))\right]dq\right],
\end{multline}
\vspace{-1em}
{\footnotesize
\begin{multline}
    \zeta(q, T_{c}^{d}, T_4, \Bar{S}^{d}(R_0)) \!=\! \frac{1\!-\!\exp\left(\frac{-T'}{\Bar{S}^{d}(R_0)} \!\left(1\!+\!j\frac{q\Bar{S}^{d}(R_0)}{T_{c}^{d}}\right)\right)}{1+j\frac{q \Bar{S}^{d}(R_0)}{T_{c}^{d}}} e^{jqN_{\sigma}^{d}}\\
    + \phi_{\mathcal{P}^{u}}(q)e^{-jqT_{e}} \frac{\exp\left(\frac{T_{e}\left(jq \Bar{S}^{d}(R_0)-1\right)}{\Bar{S}^{d}(R_0)} \right) \!-\! \exp\left(\frac{T' \left(jq\Bar{S}^{d}(R_0)-1\right)}{\Bar{S}^{d}(R_0)}\right)}{jq\Bar{S}^{d}(R_0)-1}\!.
\end{multline}} 
\end{theorem}
\begin{proof}
    Because of the assumptions made before, $\mathbb{P}\left[\text{\normalfont{SINR}}^{u}_0 > T_{c}^{u}, \mathcal{P}^{tot} < T_{e}, \text{\normalfont{SINR}}^{d}_0 > T_{c}^{d}\right]$ can be written $\mathbb{P}\left[\text{\normalfont{SINR}}^{u}_0 > T_{c}^{u}\right]\!\times \!\mathbb{P}\left[\mathcal{P}^{tot} < T_{e}, \text{\normalfont{SINR}}^{d}_0 > T_{c}^{d}\right]$. The first term of the product is \eqref{eq:cdf_cov_u}. The second term is very similar to $\mathbb{P}\left[\mathcal{P}^{d} < T_{e}, \text{\normalfont{SINR}}^{d}_0 > T_{c}^{d}\right]$ calculated in \cite{GontierTWC}. The only difference is that $\mathcal{P}^{d}$ is replaced by $\mathcal{P}^{tot}$. The problem can be solved by writing $\mathbb{P}\left[ {S_0^{d}}+{I_0^{d}} < T'_{e}, \text{\normalfont{SINR}}^{d}_0 > T_{c}^{d}\right]$ with $T'_{e}\! =\! T_{e}-\mathcal{P}^{u}$.
\end{proof}

For the ease of analysis, we propose in Lemma~\ref{lem:cond_joint} the conditional joint CDF of EMP-UDC, conditioned on specific values of UL SINR thresholds and EMFE limits. This is particularly useful for legislators to observe the impact of a more stringent EMFE limit on the network performance. 
\begin{lemma}\label{lem:cond_joint}
    The joint CDF of EMP-UDC, conditioned on specific values of UL SINR thresholds and EMFE limits is
    \begin{equation}
        H(T_{c}^{d}|T_{c}^{u}, T_{e}) = \frac{G(T_{c}^{u},T_{e},T_{c}^{d})}{F_{\textrm{cov}}^{u}\left(T_{c}^{u}\right)\times F_{\textrm{emfe}}^{tot}\left(T_{e}\right)}
    \end{equation}
    where
    \begin{multline}
    F^{d}_{\textrm{cov}}(T_{c}^{d}) \triangleq \mathds E\left[\mathbb{P}\left[\textrm{\normalfont SINR}_0^{DL} > T_{c}^{d}\right]\right]\\
    = \int_{r_e}^{\tau} e^{-T_{c}^{d} N_{\sigma}^{d} /\Bar{S}^{d}(r_0)} \phi_{I_0^{d}}\left(j T_{c}^{d}/\Bar{S}^{d}(r_0)\Big|r_0\right) \, f_{R_0}^\beta(r_0)\,dr_0.
\end{multline}
\end{lemma}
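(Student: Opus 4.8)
The plan is to treat $H(T_{e}\,|\,T_{c}^{u},T_{c}^{d})$ as the CDF of the total EM pollution conditioned on the two coverage events and to expand it with the elementary definition of conditional probability,
\[
H(T_{e}\,|\,T_{c}^{u},T_{c}^{d}) = \frac{\mathbb{P}\!\left[\mathcal{P}^{tot}<T_{e},\ \mathrm{SINR}_0^{u}>T_{c}^{u},\ \mathrm{SINR}_0^{d}>T_{c}^{d}\right]}{\mathbb{P}\!\left[\mathrm{SINR}_0^{u}>T_{c}^{u},\ \mathrm{SINR}_0^{d}>T_{c}^{d}\right]}.
\]
The numerator is, by construction, the joint CDF $G(T_{c}^{u},T_{e},T_{c}^{d})$ of Theorem~\ref{eq:joint_all}, so all that remains is to identify the denominator with $F_{cov}^{u}(T_{c}^{u})\,F_{cov}^{d}(T_{c}^{d})$ and to furnish the still-missing DL factor, the UL factor $F_{cov}^{u}$ being already available from Theorem~\ref{eq:jointUL}.

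For the DL factor I would condition on $R_0=r_0$ and exploit the unit-mean exponential fading $|h_0^{d}|^2\sim\mathrm{Exp}(1)$: since $\mathrm{SINR}_0^{d}>T_{c}^{d}$ is equivalent to $|h_0^{d}|^2>T_{c}^{d}(I_0^{d}+N_{\sigma}^{d})/\Bar{S}^{d}(r_0)$ and the useful fading is independent of $I_0^{d}$, the conditional coverage splits into a noise term and the Laplace transform of the DL interference,
\[
F_{cov}^{d}(T_{c}^{d}\,|\,r_0) = e^{-T_{c}^{d}N_{\sigma}^{d}/\Bar{S}^{d}(r_0)}\;\mathbb{E}\!\left[e^{-T_{c}^{d}I_0^{d}/\Bar{S}^{d}(r_0)}\,\big|\,r_0\right].
\]
Rewriting the expectation through the DL interference CF $\phi_{I_0^{d}}(\,\cdot\,|r_0)$ that already appears inside $M$, and then averaging over the serving distance with $f_{R_0}$ from \eqref{eq:fR}, reproduces the integral stated in the lemma. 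Combining this with $F_{cov}^{u}$ yields the claimed denominator.

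The delicate step is the factorisation of the joint coverage probability. I would argue, exactly as in the proof of Theorem~\ref{eq:exptot}, that once the true Poisson-Voronoï coupling is replaced by an independent BS H-PPP (governing $\mathrm{SINR}_0^{d}$ through $\Psi_b$ and $\{|h_i^{d}|^2\}$) and an independent interferer I-PPP (governing $\mathrm{SINR}_0^{u}$ through $\Phi_I$ and $\{|h_i^{u}|^2\}$), the UL and DL coverage events become independent and their joint probability factors. The subtlety I expect to have to address explicitly is that both conditional coverages still depend on the common serving distance $R_0$, so $F_{cov}^{u}(T_{c}^{u})\,F_{cov}^{d}(T_{c}^{d})$ equals the true joint coverage only after the residual correlation carried by $R_0$ is neglected; I would state that $H$ is defined with this product normalisation---consistent with the UL/DL independence already invoked for $\mathcal{P}^{tot}$---which is exactly what makes $H(\cdot\,|\,T_{c}^{u},T_{c}^{d})$ a bona fide CDF, increasing from $0$ to $1$ as $T_{e}$ grows.
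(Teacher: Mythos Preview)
Your approach is exactly the paper's: it invokes Bayes' rule for the ratio and cites the standard PPP downlink coverage formula for $F_{cov}^{d}$, without further elaboration. Your expansion of the DL factor via the exponential-fading trick and the interference Laplace transform (written through $\phi_{I_0^{d}}$) is precisely that well-known derivation spelled out.

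One caveat on your closing remark. You correctly flag that $\mathrm{SINR}_0^{u}$ and $\mathrm{SINR}_0^{d}$ remain coupled through the common serving distance $R_0$ even after the I-PPP/H-PPP decoupling, so that the joint coverage probability is $\int F_{cov}^{u}(T_{c}^{u}\mid r_0)\,F_{cov}^{d}(T_{c}^{d}\mid r_0)\,f_{R_0}(r_0)\,dr_0$ rather than the product $F_{cov}^{u}(T_{c}^{u})\,F_{cov}^{d}(T_{c}^{d})$. The paper's one-line proof does not address this and effectively treats the product as the denominator by definition. But then your assertion that $H(\cdot\mid T_{c}^{u},T_{c}^{d})$ is ``a bona fide CDF, increasing from $0$ to $1$'' is not quite right: as $T_{e}\to\infty$ the numerator tends to the integral above, not to the product, so $H(\infty\mid T_{c}^{u},T_{c}^{d})$ need not equal $1$ exactly. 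It is better to present $H$ as a convenient normalised diagnostic (which is how the paper uses it numerically) rather than a strict conditional CDF.
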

\begin{proof}
    This metric is obtained from Bayes'rule, and by taking advantage of the hypothesis of the independence between UL SINR and UL/DL EMFE.
\end{proof}
It is worth noting that a joint CDF conditioned on specific values of UL and DL SINR could also be defined but this would require the additional calculation of a joint metric of UL and DL SINR.

\section{Numerical Results}
\label{sec:results}
In this section, the four questions of the introduction are answered, using the metrics defined in Section~\ref{sec:analytical_results}. Before going deeper, the term \textit{densification} has to be clarified. The number of BSs, clusters or UEs in the clusters can independently increase. To evaluate the impact of each, four scenarios are used. 
\begin{enumerate}[(a)]
    \item \textbf{Densification of clusters with constant BS density and number of points per cluster}
    \item \textbf{Densification of points in each clusters with constant cluster and BS densities}
    \item \textbf{Densification of BSs and clusters with a constant UE/BS density ratio $\lambda^{u}/\lambda^{b}$}
    \item \textbf{Densification of BSs and points in each clusters with a constant UE/BS density ratio $\lambda^{u}/\lambda^{b}$}
\end{enumerate}

In real-world scenarios, as BSs are typically macro cellular BSs (MCs) for low densities of BSs/UEs, they are more likely to be small cells (SCs) for high BSs/UEs densities for two reasons. First, as the density of BSs increases, it is of interest to decrease the emit power at the BSs because the emit power and the BS density are proportional to the mean DL EMFE, as can be seen from \eqref{eq:mean_exp_DL}. Second, there is a maximum number of UEs that a single BS can accommodate. We set the maximal number of UEs per BS as the number of RBs available over the channel bandwidth given by
\begin{equation}
    N_{RB} = \frac{B_w-2 \,\text{GB}}{\text{SCS} \cdot \text{subframes per frame}}.
\end{equation}
At 2.6~GHz, with a channel bandwidth of 20~MHz, a guardband (GB) of 5\% of the channel bandwidth at each extremity, a subcarrier spacing (SCS) of 15~kHz and 12 subframes per frame, the maximal number of RBs is $N_{RB} = 100$. Therefore, when $\lambda^{u} > 100 \lambda^b$ or $\lambda^b > \numprint[BS/km^2]{1000}$, we switch from a network of MCs to a network of SCs.
\begin{table}[h!]
    \begin{center}
    \begin{tabular}{|c|c||c|c|} 
     \hline
     $f_u$ & \numprint[GHz]{2.560} & $f_d$ & \numprint[GHz]{2.680}\\ 
     $B_w^u$ & \numprint[MHz]{20} & $B_w^d$ & \numprint[MHz]{20}\\ 
     $r_e$ & \numprint[m]{1} & $\tau$ & \numprint[km]{30}\\
     $r^{[p]}$ & \numprint[m]{100} & $p^{[o]}$ & 0.5\\
     $P_{u, max}$ & 23~dBm & $\upepsilon$ & 0.4\\
     $N_{\sigma}^{d} = N_{\sigma}^{u}$ & \numprint[dBm]{-95.40} & SNR$_{cell}^{u}$ & 3\\
     $f^{[o]}$ & $\frac{\exp\left({\frac{-(x-\numprint[m]{1})^2}{2 \cdot (\numprint[m]{1})^2}}\right)}{2\pi \cdot (\numprint[m]{1})^2}$ &  & \\
     \hline
    \end{tabular}
\vspace*{1 em}

    \begin{tabular}{ |c|c|c| } 
     \hline
     Parameter & MC & SC\\
     \hline
     $\alpha$ & 3.25 & 2.5\\
     $z$ & 33~m & 4m\\
     $P_{d}$ & 66~dBm & 40~dBm\\
     \hline
    \end{tabular}
\vspace*{1 em}

    \begin{tabular}{|c|c|c|c|} 
     \hline
     Scenario & $\lambda^b$ & $\lambda^{[gp]}$ & $\lambda^{[p]}$\\
     \hline
     \multirow{2}{*}{(a)} & $\numprint[BS/km^2]{10}$ (MC) & \multirow{2}{*}{Varies} & \multirow{2}{*}{$\numprint[km^{-2}]{160}$}\\
     & $\numprint[BS/km^2]{1000}$ (SC) &  & \\
     \hline
     \multirow{2}{*}{(b)} & \numprint[BS/km^2]{10} (MC) & \multirow{2}{*}{\numprint[clusters/km^2]{2.5}} & \multirow{2}{*}{Varies}\\
     & \numprint[BS/km^2]{1000} (SC) &  & \\
     \hline
     (c) & Varies & $2.5\lambda^b$ & \numprint[km^{-2}]{160}\\
     \hline
     (d) & Varies & \numprint[clusters/km^2]{25} & $\lambda^b\cdot 16$\\
     \hline
    \end{tabular}
    \end{center}
    \caption{Simulation parameters}
    \label{tab:sim_param}
\end{table}
The network parameters are listed in Table~\ref{tab:sim_param} with, first, the parameters that are unchanged for all simulations, second, the parameters changing for MCs and SCs, chosen according \cite{GontierAccess} for MCs and \cite{GSMAsmallcell} for SCs and third, the different densities for each scenario. The UL open-loop power $P_0^u = \textrm{SNR}_{cell}^{u} N_{\sigma}^{u} \kappa_u^{1-\upepsilon} \left(\frac{1}{16\lambda^u}+z^2\right)^{(1-\upepsilon)\alpha/2}$ is adapted so that the average UL SNR is always 3 for cell-edge users. This quantity remains constant for constant BS density. The bandwidth of 20~MHz is a typical value for the 2.6~GHz frequency band. The value of the FPC coefficient is chosen after typical values obtained in \cite{Ubeda08, Simonsson08}. The maximal UL transmit power is another typical value confirmed in \cite{Joshi17}. The value $\tau$ is chosen in order to satisfy the conditions $\tau >> R$ and $r_m < \tau$.

\subsection{Analysis of EM Pollution}
\label{ssec:UL_EMF_exp}

To address the primary objective of this study, we initiated a comparison between UL and DL EMFEs. This assessment aims to provide insights into whether it would be advisable for a legislator to consider incorporating UL considerations into future regulations. 
\begin{figure*}
\centering
\begin{minipage}{.33\textwidth}
  \centering
  \includegraphics[width=0.9\linewidth, trim={5.5cm, 9cm, 5.9cm, 9cm}, clip]{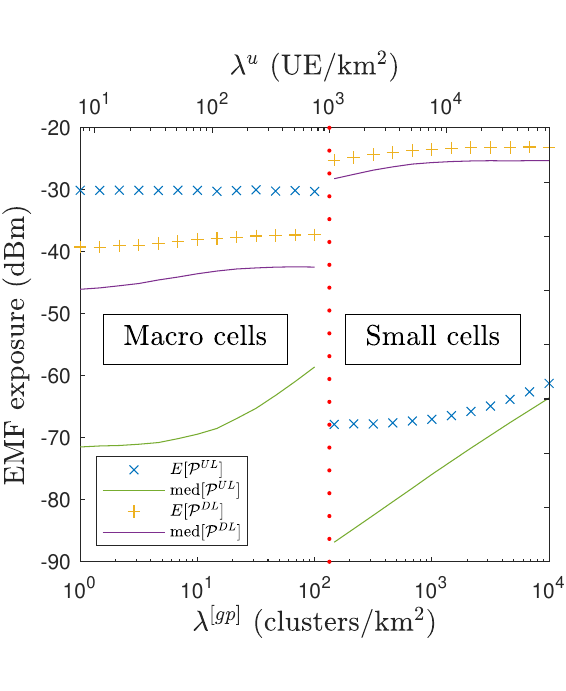}
  \captionof{figure}{Mean and median UL/DL EMFE in densification scenario~(a).}
\label{fig:ExpUL_GP_density_stat}
\end{minipage}%
\begin{minipage}{.33\textwidth}
  \centering
\includegraphics[width=0.9\linewidth, trim={5.5cm, 9cm, 5.9cm, 9cm}, clip]{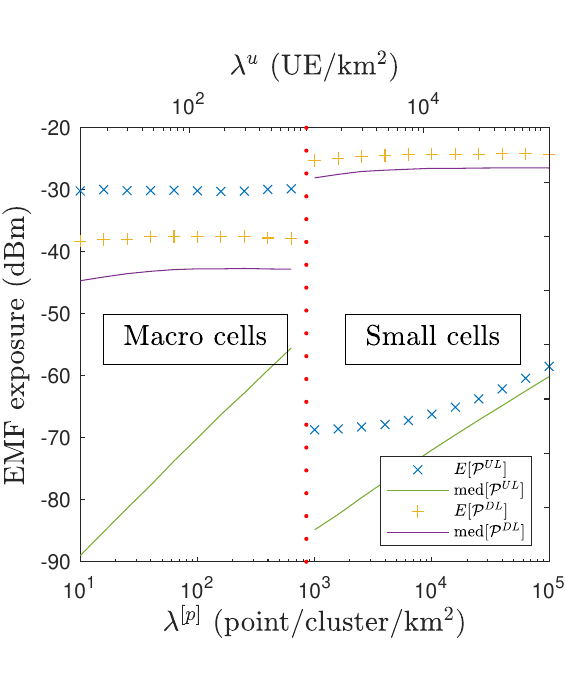}
  \captionof{figure}{Mean and median UL/DL EMFE in densification scenario~(b).}
  \label{fig:ExpUL_PA_density_stat}
\end{minipage}%
\begin{minipage}{.33\textwidth}
  \centering
  \includegraphics[width=0.9\linewidth, trim={5.5cm, 9cm, 5.9cm, 9cm}, clip]{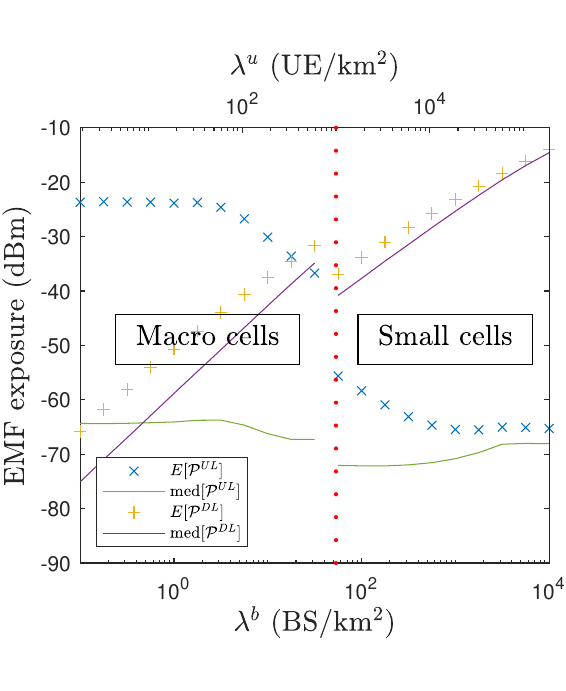}
  \captionof{figure}{Mean and median UL/DL EMFE in densification scenario~(c).}
\label{fig:ExpUL_BS_GP_density_stat}
\end{minipage}
\end{figure*}
Figs.~\ref{fig:ExpUL_GP_density_stat}, \ref{fig:ExpUL_PA_density_stat}, \ref{fig:ExpUL_BS_GP_density_stat} and \ref{fig:ExpUL_BS_PA_density_stat} illustrate the mean and median of UL and DL EMFEs for scenario~(a) to (d). The mean is obtained using the mathematical expressions in Corollary~\ref{cor:mean_tot} while the median is obtained via Monte-Carlo simulations (MCSs). In scenario~(a), for MCs, the mean UL EMFE is notably high at -30~dBm, which is 7 to 9~dBm higher than the mean DL EMFE and 30 to 40~dBm higher than the median UL EMFE. This outcome is not surprising due to the clustered nature of the network. In instances where another UE is near the typical user, EMFE values spike. Since BSs are generally distant from UEs, many UEs operate near their maximum power, which is not observed in a network of small cells SCs despite higher UE density. 

The median UL EMFE shows that most of the time, DL EMFE is significantly higher than UL EMFE. The analysis also reveals an increase in both mean and median DL EMFE with UE densification. This slight increase is because some cells may have no UEs, causing the corresponding BS to remain inactive. A smaller $\lambda^{u}/\lambda^{b}$ ratio results in a higher $\nu = (\gamma/(\gamma+\delta))^\gamma$ and a smaller $\lambda^{r}$ compared to $\lambda^{b}$. Scenario~(b) shows similar results to scenario~(a). The main difference is the median UL EMFE for MCs. Despite a high number of clusters in scenario~(b), low UE density within each cluster keeps UL EMFE low. This demonstrates that significant UL EMFE only occurs within the same cell. As cluster density or UE density per cluster increases, scenarios~(a) and (b) converge because high cluster density leads to merging clusters, increasing UEs in the typical user's cluster.

\begin{figure*}
\centering
\begin{minipage}{.33\textwidth}
  \centering
\includegraphics[width=0.9\linewidth, trim={5cm, 9cm, 5.9cm, 9cm}, clip]{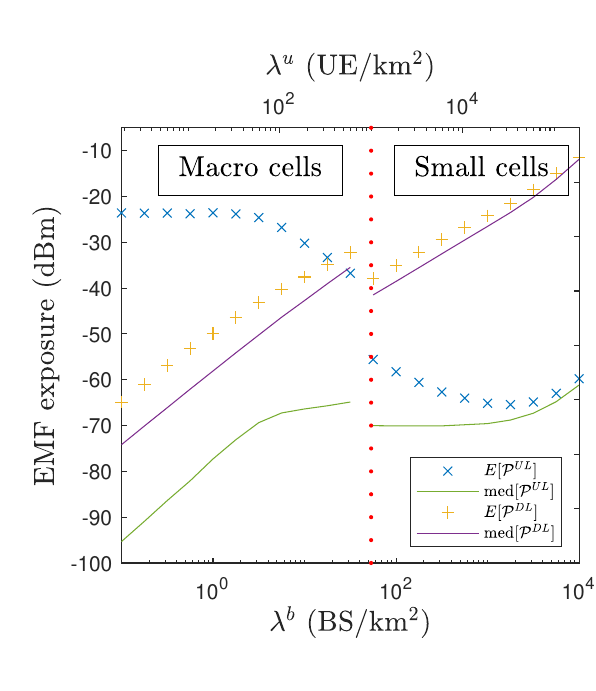}
  \captionof{figure}{Mean and median UL/DL EMFE in densification scenario~(d).}
  \label{fig:ExpUL_BS_PA_density_stat}
\end{minipage}%
\begin{minipage}{.33\textwidth}
  \centering
\includegraphics[width=0.9\linewidth, trim={5cm, 9cm, 5.9cm, 9cm}, clip]{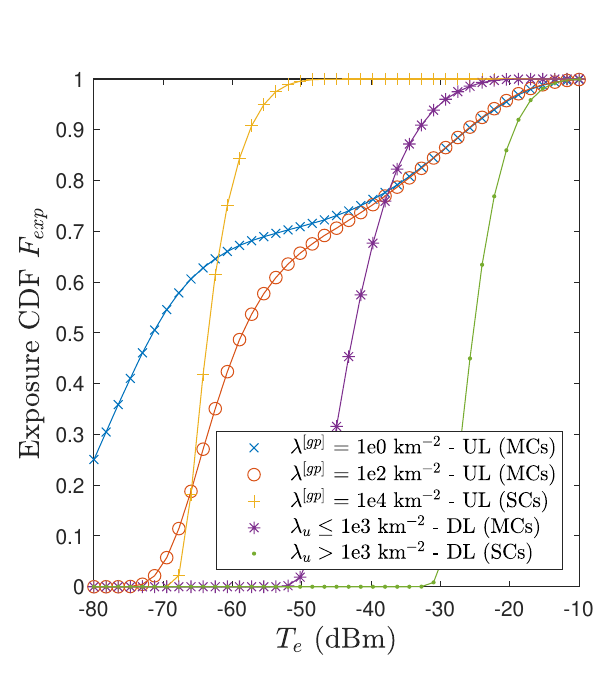}
  \captionof{figure}{Comparison of CDFs of UL and DL EMFEs, scenario~(a)}
  \label{fig:Exp_GP_density_cdf}
\end{minipage}%
\begin{minipage}{.33\textwidth}
  \centering
\includegraphics[width=0.9\linewidth, trim={5cm, 9cm, 5.9cm, 9cm}, clip]{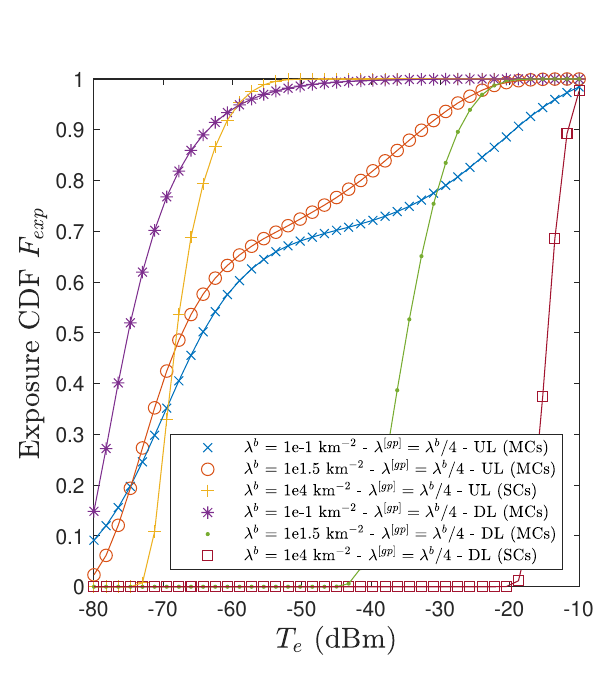}
  \captionof{figure}{Comparison of CDFs of UL and DL EMFEs, scenario~(c)}
  \label{fig:Exp_BS_GP_density_cdf}
\end{minipage}
\end{figure*}
In scenarios~(c) and (d), the mean and median DL EMFE are directly proportional to BS density. As BS density increases, the average distance between a UE and its serving BS decreases. Consequently, the FPC mechanism adjusts the UE's transmit power, explaining the inflection point in the mean and median UL EMFE curves. The difference in median UL EMFE between the scenarios is again due to significant UL EMFE occurring only within the same cell. 
In all scenarios, transitioning to a network of SCs as the UE/BS density increases proves beneficial for uplink EMFE, as it remains globally lower and avoids high-likelihood spiking values. However, in scenarios~(a) and (c), DL EMFE increases with SCs despite the lower BS transmit power. This can be attributed to the proportional relationship between mean DL EMFE and both transmit power and BS density. Specifically, while the transmit power decreases by a factor of $10^{2.6}$, the BS density increases by a factor of $10^3$. Conversely, in scenarios~(b) and (d), DL EMFE decreases with SCs due to the chosen network parameters.

Given similar conclusions from scenarios~(a) and (b) and scenarios~(c) and (d), the study will focus on scenarios~(a) and (c). Consistent findings are evident in the CDFs of DL and UL EMFE in Figs. \ref{fig:Exp_GP_density_cdf} and \ref{fig:Exp_BS_GP_density_cdf} for scenario~(a) and (c), respectively, derived from Lemma~\ref{lem:DL_EMF} and Theorem~\ref{eq:expucdf}. 
MCSs and mathematical expressions match in both figures. 

In scenario~(a) for MCs, for the chosen parameters, there is a 77\% probability that both UL and DL EMFE are below $T_e = \numprint[dBm]{-38}$, regardless of UE density. The total EMFE, not shown for clarity, combines both CDFs, being influenced mainly by DL EMFE for $T_e < \numprint[dBm]{-38}$ and by UL EMFE for $T_e > \numprint[dBm]{-38}$. For SCs, the total EMFE almost equals DL EMFE. Comparing scenarios~(a) and (c), UL EMFE increases with the number of clusters but decreases as the number of BSs proportionally increases. The unusual CDF shapes for UL EMFE in MCs result from significant instances where UEs are very close to the typical user and transmit at high power, explaining why the mean UL EMFE can sometimes exceed the mean DL EMFE, though this does not occur for the medians.

\subsection{Meta Distribution of EMFE}
The second question of the introduction can be answered by examining Fig.~\ref{fig:UL_meta_Te}. This figure represents the meta distribution of UL, DL and total EMFE for two EMFE limits as defined in~\eqref{eq:meta_def}. The solid lines are generated using MCSs with $5000 \times 5000$ realizations. The markers represent numerical values obtained by using Proposition~\ref{prop:beta_approx}. The simulation parameters are taken from Table~\ref{tab:sim_param} with $\lambda^b = \numprint[BS/km^2]{10}$, $\lambda^{[gp]} = \numprint[clusters/km^2]{50}$ and $\lambda^{[p]} = \numprint[parents/km^2]{318}$, meaning that the average UE density is $\lambda^u = \numprint[UE/km^2]{750}$. A significant alignment occurs between MCSs and the analytical approximation for the UL and DL meta distribution. The discrepancy between the lines remains small for the total EMFE at $\mathcal{F}_{F_{\textrm{emfe}}^{tot}}(\numprint[dBm]{-27}, s)$ but becomes more pronounced as $T_e$ decreases as seen with $\mathcal{F}_{F_{\textrm{emfe}}^{tot}}(\numprint[dBm]{-37}, s)$. This indicates that the approximation is more accurate when there is a high probability that a large number of users are below a stringent threshold. Despite the complex mathematical expressions for the moments, using stochastic geometry (SG) proves advantageous due to the reduced computational time compared to MCSs. At $T_e \!= \!\numprint[dBm/m^2]{-37}$, over 90\% of UE locations have a 87\% probability of experiencing UL EMFE below this threshold. In other words, 90\% of users encounter EMFE below $T_e \!= \!\numprint[dBm/m^2]{-37}$ at least 90\% of the time. The same proportion of users is below it 41\% of the time for the DL, contrasting with only 34\% for total EM pollution. The different between DL and total EMFE arises again from high UL EMFE values occurring with a low probability. Similarly, for 90\% of users being below the threshold 90\% of the time, $T_e\! \geq \!\numprint[dBm/m^2]{-27}$ is required as shown by the upper right turquoise rectangle.
\begin{figure*}
\centering
\begin{minipage}{.33\textwidth}
  \centering
\includegraphics[width=0.9\linewidth, trim={5cm, 9cm, 5.9cm, 10cm}, clip]{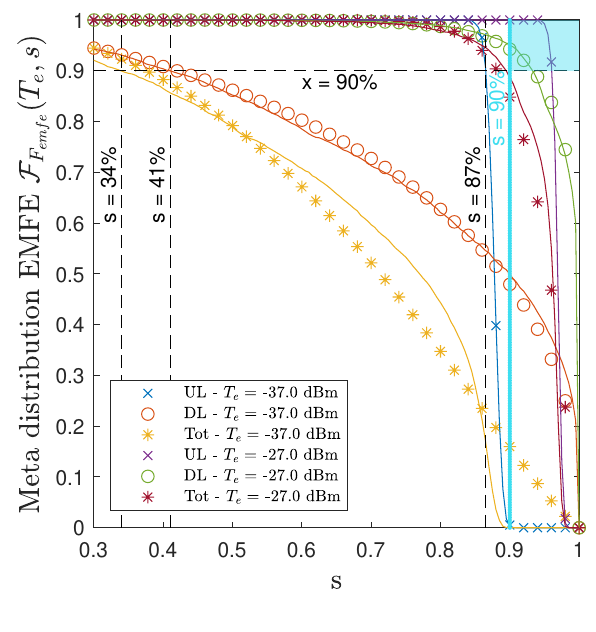}
  \captionof{figure}{Meta distribution of EMFE with UEs $\sim$ MCP + GPP (MCs)}
  \label{fig:UL_meta_Te}
\end{minipage}%
\begin{minipage}{.33\textwidth}
  \centering
\includegraphics[width=0.9\linewidth, trim={5cm, 9cm, 5.9cm, 10cm}, clip]{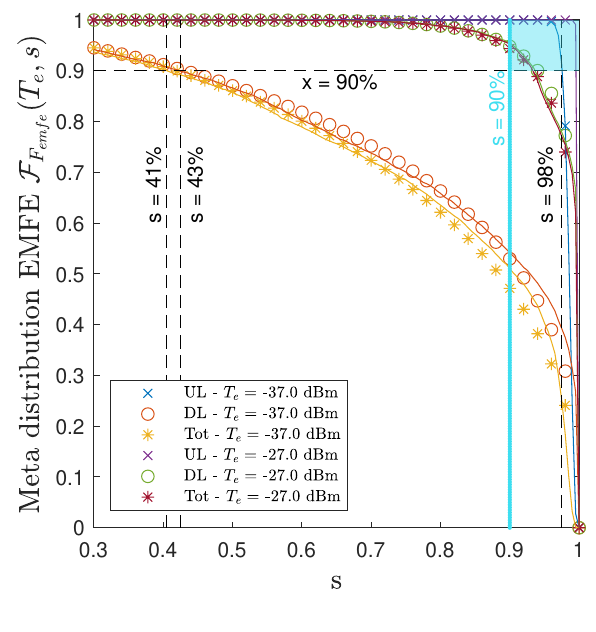}
  \captionof{figure}{Meta distribution of EMFE with UEs $\sim$ MCP (MCs)}
  \label{fig:UL_meta_MCP_Te}
\end{minipage}%
\begin{minipage}{.33\textwidth}
  \centering
\includegraphics[width=0.9\linewidth, trim={5cm, 9cm, 5.7cm, 10cm}, clip]{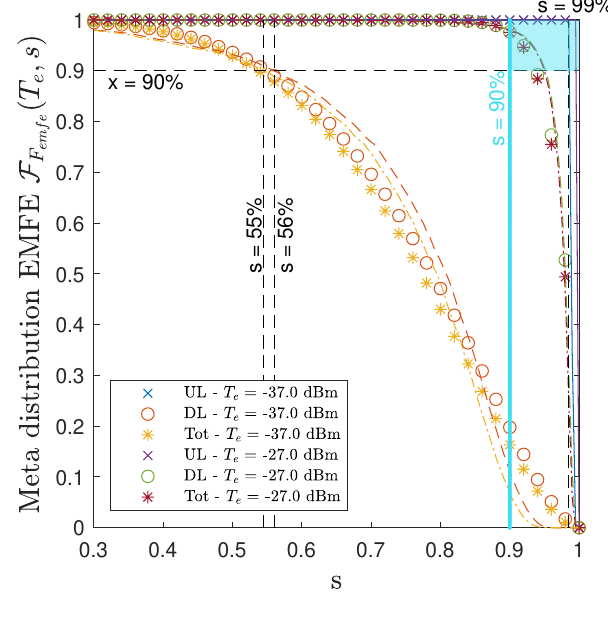}
  \captionof{figure}{Meta distribution of EMFE with UEs $\sim$ H-PPP (MCs)}
  \label{fig:UL_meta_HPPP_Te}
\end{minipage}
\end{figure*}
Fig.~\ref{fig:UL_meta_MCP_Te} depicts the meta distribution of EMFE for UEs distributed as a MCP, i.e. with $p^{[o]}$ = 0. To ensure a fair comparison, $\lambda^{[p]} = \numprint[parents/km^2]{477}$ maintaining the same total UE density as in the first case. In this figure, the two-moment approximation performs significantly better across all curves due to the simpler mathematical model. In  Fig.~\ref{fig:UL_meta_MCP_Te}, the two-moments approximation is much better for all curves because of the easier mathematical model. The distribution of DL and total EMFE are much closer than those in Fig.~\ref{fig:UL_meta_Te}, indicating that UL EMFE is several orders of magnitude lower than DL EMFE in almost all situations. This difference arises from the fact that neighboring users can position their second equipment closer to the typical user than their own position in the MCP+GPP model. When UEs follow a MCP distribution, even though they are clustered, their density within a cluster is uniform, implying no attraction or repulsion between UE locations. This contrasts with the MCP+GPP model, making these two models represent extreme cases of a network with clustered UEs. In the MCP+GPP network, 34\% of PU locations had a 34\% probability of experiencing EM pollution below $T_e != !\numprint[dBm/m^2]{-37}$. In contrast, in an MCP network, this probability increases to 41\%, and if UEs are distributed as a Homogeneous Poisson Point Process (H-PPP), the probability further rises to 55\%. This last value is shown in Fig.~\ref{fig:UL_meta_HPPP_Te}, which depicts the meta distribution for a network with UEs distributed as an H-PPP with $\lambda^u = \numprint[UE/km^2]{750}$.

\subsection{Joint UL EMFE/Coverage Performance}
\label{ssec:jointUL}
This section demonstrates the impact of BS and UE densification on network performance. It explores how to determine the optimal BS density to deploy, given a specific coverage threshold and an EMFE upper limit. It also highlights the type of BS to deploy depending on the density of UEs.

Fig.~\ref{fig:UL12_GP_density_joint} displays the joint CDF of UEC, derived from Theorem~\ref{th:jointUL}, for various values of $T_{e}$ (EMFE limit) and $T_{c}$ (coverage threshold) in densification scenario~(a). It is worth noting that the close proximity between MCSs (solid lines) and numerical values (markers) from the mathematical model validates the approximation of no correlation between UL EMFE and UL SINR.
In the case of a MC network, a bell-shaped curve is observed, indicating optimal performance at UE densities between approximately 80 to $\numprint[UE/km^2]{300}$ (10 to $\numprint[cluster/km^2]{40}$) when $\lambda^{b} = \numprint[BS/km^2]{10}$. As UE density increases, $T_{e}$ becomes increasingly significant. If the number of clusters or UEs rises too much, performance can rapidly decline depending on the EMFE limit. The right side of the graph shows that transitioning to a SC network is beneficial when UE density exceeds $\numprint[UE/km^2]{1000}$. Performance then remains stable until cluster density reaches several thousand per km². Specifically, if the EMFE limit is set to $\numprint[dBm]{-59}$, switching to a SC network is advantageous when UE density is below $\numprint[UE/km^2]{1000}$. This analysis allows for understanding network performance across different UE density ranges and determining when to switch the type of BS deployed.

Fig.~\ref{fig:UL12_BS_GP_density_joint} shows the same metric in densification scenario~(c), illustrating the optimal BS density to deploy based on coverage and EMFE limits. For example, with $T_{c} = \numprint[dB]{0}$ and $T_{e} = \numprint[dBm]{-45}$, depicted by the purple curve with star markers, the optimal BS density is $\numprint[BS/km^2]{3.2}$, resulting in a joint performance metric $G(\numprint[dB]{0},\numprint[dBm]{-45}) = 0.29$. A more stringent EMFE limit of $\numprint[dBm]{-59}$, shown by the yellow curve with '$+$' markers, reduces overall network performance to $G(\numprint[dB]{0},\numprint[dBm]{-59}) = 0.25$, with the optimal BS density associated with optimal performance shifting to the left, decreasing to [...]. At lower BS densities, the difference between the curves narrows when modifying the EMFE limit, indicating that coverage remains the limiting factor.
Additionally, starting again from $G(\numprint[dB]{0},\numprint[dBm]{-45})$, adopting a lower coverage threshold of \numprint[dB]{-5} results in an average 72\% improvement in network performance. Swapping from a MC network to a SC network improves performance when a stricter coverage threshold is applied. Regardless of the coverage threshold, having SCs instead of MCs is beneficial for network performance when density exceeds $\numprint[UE/km^2]{1000}$ ($\numprint[BS/km^2]{53}$), as it allows reaching a higher optimal BS density before performance declines. At $\numprint[dB]{0}$, optimal performance is achieved with a BS density of $\numprint[BS/km^2]{177}$ if $T_{e} = \numprint[dBm]{-45}$ or $\numprint[BS/km^2]{316}$ if $T_{e} = \numprint[dBm]{-59}$. At high SC densities, the EMFE limit has little influence on performance, while the coverage threshold remains a significant factor.

\begin{figure}
\centering
\includegraphics[width=0.65\linewidth, trim={5cm, 9cm, 5.9cm, 9.8cm}, clip]{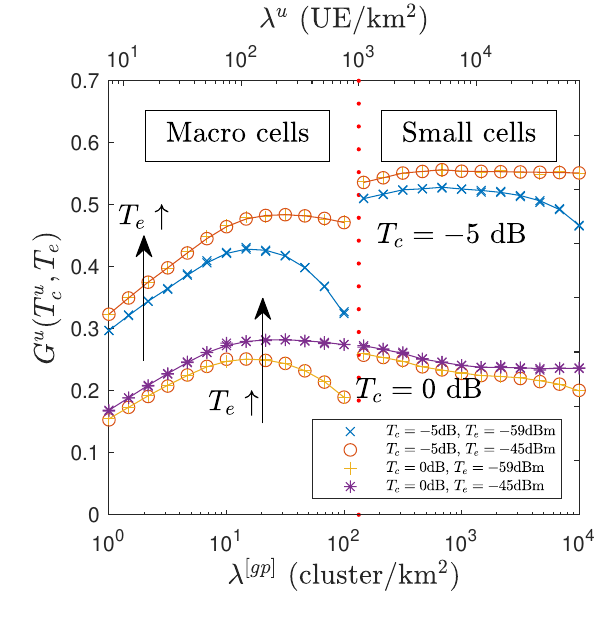}
\caption{Joint CDF of UEC for specific threshold values for scenario~(a)}
\label{fig:UL12_GP_density_joint}
\end{figure}

\begin{figure}
\centering
\includegraphics[width=0.65\linewidth, trim={5cm, 9cm, 5.9cm, 9.8cm}, clip]{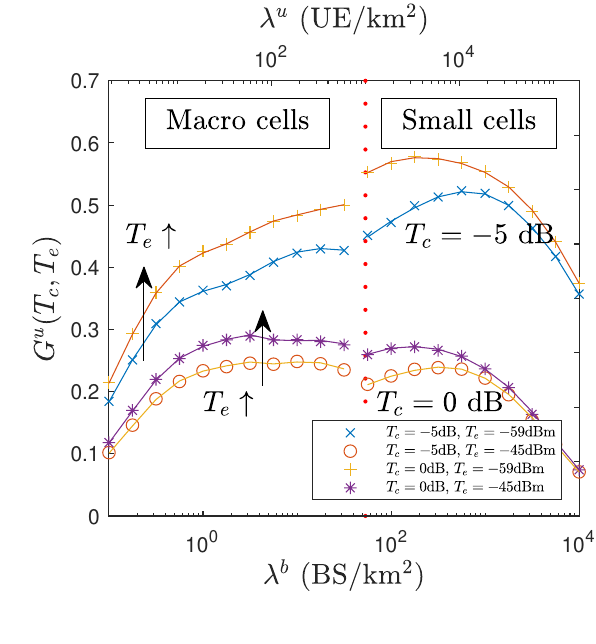}
\caption{Joint CDF of UEC for specific threshold values for scenario~(c)}
\label{fig:UL12_BS_GP_density_joint}
\end{figure}

\subsection{Joint EM Pollution/UL-DL Coverage Performance}
\label{ssec:jointULDL}
Finally, the conditional CDFs of EMP-UDC obtained from Lemma~\ref{lem:cond_joint} are displayed in Figs.~\ref{fig:UL12p54_GP_density_cond} and \ref{fig:UL12p54_BS_GP_density_cond} for scenarios~(a) and (c), respectively. These CDFs are conditioned on $T_{c}^{u} = \numprint[dB]{0}$ and $T_{e} = \numprint[dBm]{-31}$ for the sake of analysis and are presented for various values of DL SINR threshold. Here, both DL and UL EMFE are combined. The results demonstrate a strong correlation between the mathematical expressions and MCSs, with network performance decreasing as the BS or UE density increases. Transitioning to a network of SCs at high UE densities proves beneficial in scenario~(c) but not in scenario~(a). This discrepancy arises from the chosen network parameters, where DL EMFE increases with SCs in scenario~(a), as discussed in ~\ref{ssec:UL_EMF_exp}. These figures address the fourth question posed in the introduction. Users who meet the uplink SINR requirement of $\numprint[dB]{0}$ and experience an EM pollution below $\numprint[dBm]{0}$ will achieve a DL SINR threshold of $\numprint[dB]{-6}$ with over 80\% probability if the cluster density is below 0.8~cluster/km$^2$ (fewer than 50~UE/km$^2$) in scenario~(a) and if the BS density is below 5.6~BS/km$^2$ (fewer than 105~UE/km$^2$) in scenario~(c).

The low network performance is attributed to the strict EMFE limit chosen, $T_{e} = \numprint[dBm]{-31}$. When converted into IPD using~\eqref{eq:IPD_def}, this limit corresponds to $\numprint[W/m^2]{7.3e-4}$, which is significantly lower than the international ICNIRP recommendation of $\numprint[W/m^2]{10}$ \cite{icnirp}. A less strict EMFE limit would reduce the impact of EMFE constraints on network performance, leading to improved performance metrics. Additionally, this study considers only a single frequency band, whereas the ICNIRP recommendation applies to the cumulative EMFE from all frequency bands.

\begin{figure}
\centering
\includegraphics[width=0.65\linewidth, trim={5cm, 9cm, 5.9cm, 9.8cm}, clip]{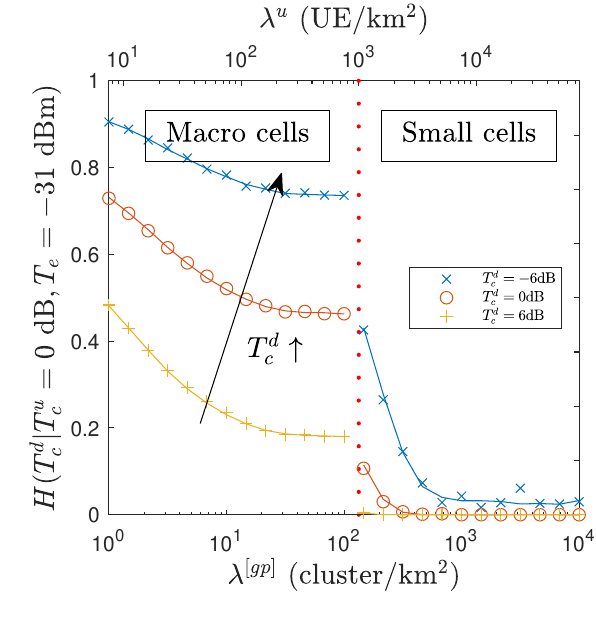}
\caption{Joint CDF of EMP-UDC conditioned on $T_{c}^{u} = \numprint[dB]{0}$ and $T_{e} = \numprint[dBm]{-31}$ in scenario~(a)}
\label{fig:UL12p54_GP_density_cond}
\end{figure}

\begin{figure}
\centering
\includegraphics[width=0.65\linewidth, trim={5cm, 9cm, 5.9cm, 9.8cm}, clip]{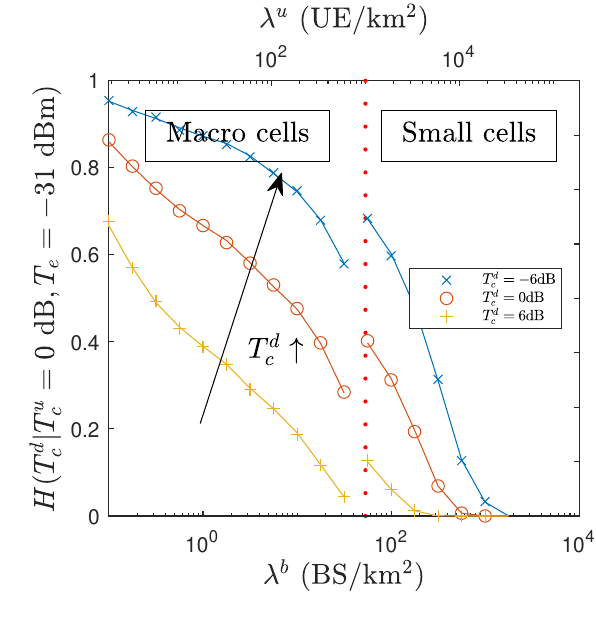}
\caption{Joint CDF of EMP-UDC conditioned on $T_{c}^{u} = \numprint[dB]{0}$ and $T_{e} = \numprint[dBm]{-31}$ in scenario~(c)}
\label{fig:UL12p54_BS_GP_density_cond}
\end{figure}

\section{Conclusion}
\label{sec:conclusion}
In this paper, we have developed mathematical expressions based on SG to evaluate network performance in both the UL and downlink DL in terms of EMFE and SINR. This analysis incorporates a truncated UL power control strategy in a network with clustered users, where each user may possess one or two UE devices. We introduced a general framework and provided mathematical expressions for various joint, marginal, and conditional metrics. Multiple scenarios were examined, including the densification of BSs, clusters of UEs, and UEs within each cluster.

Our findings indicate that UL EMFE generally remains several orders of magnitude lower than DL EMFE across most scenarios. Exceptions occur in instances with low UE and BS densities, where the high UE transmit power, coupled with the close proximity of other UEs, leads to elevated UL EMFE. The framework presented in this paper has demonstrated its robustness and versatility, as evidenced by the successful addressing of the four research questions posed in the introduction through both mathematical and numerical results.

It is important to highlight that this study represents a worst-case scenario where BSs transmit continuously at maximum power, UEs utilize all available resources, and EMFE limits are highly stringent. In practical deployments, more advanced strategies are employed. For instance, in 5G networks, adjacent BSs collaborate to allocate different subframes to their users at specific times, effectively reducing intercell interference and enhancing overall network performance. Future research could explore the impact of these allocation strategies.

Additionally, this study assumes all UEs are located outdoors. Integrating the performance of indoor UEs can be achieved by partitioning the UE point process into outdoor and indoor UEs with a specified probability. Indoor UEs would then be subject to a higher path loss exponent. This extension would provide a more comprehensive understanding of network performance in varied environments.

{\appendices

\section{Proof of Theorem~\ref{eq:expucdf}}\label{sec:expucdfproof}

According to its definition, the CF of UL EMFE is
\begin{equation}
    \phi_{\mathcal{P}^{u}}(q) = \mathds E\left[\exp\left(jq \sum_{i \in \Phi_{u} \setminus\left\{Y_0\right\}} P_i^{u}\, |\Tilde{h}_i|^2\, \Tilde{l}^{u}_i\right)\right].
\end{equation}
The proof is very similar to the proof of CFs of DL EMFE in \cite{GontierTWC} or UL EMFE in \cite{Qin24} and follows the steps of (i) converting the exponential of a sum to a product of exponentials (ii) taking advantage of the distibution of $|h|$ and of the pairwise independence of the $|h_i|$'s to convert each exponential into a function $v = (1-jq P_i^{u} \Tilde{l}^{u}_i)^{-1}$ and (iii) applying the probability generating functional (PGFL). In the case of a cluster process, intra- and intercell interference have to be distinguised. The PGFL then only applies on intercell interference. The PGFL of a function $v$ in a Poisson cluster process is
\begin{equation}
    G(v) = \exp\left(\lambda_c \int_{\mathbb{R}^2} \left(G_0^{[\vec{x}]}(v)-1\right)\,d\vec{x}\right)
\end{equation}
where $G_0^{[x]}(v) = \mathds E_{\Phi^{[x]}}\left[\prod_{i \in \Phi^{[x]}} v_i\right]$ is the PGFL of the cluster $\Phi^{[x]}$ that is centered at $x$. Applying this to the MCP+GPP mixture, this gives for the intercell interference
\begin{equation}
    \phi_{\mathcal{P}^u}^{inter}(q) = \exp\left(2 \pi\lambda^{[gp]}  \int_{r_c}^\tau \left(G_{\textrm{MCP}}^{[x]}(v)-1\right)x\,dx\right).
\end{equation}
The PGFL of a cluster of parents randomly distributed with a Poisson law at distance $X$ from the typical user and $R$ to the cluster's BS is $G_{\textrm{MCP}}^{[X]}(v) = \mathbb{E}_R[G_{\textrm{MCP}}^{[X]}(v|R)]$ with 
{\small
\begin{align}\label{eq:G_MCP_proof}
    \begin{split}
        &G_{\textrm{MCP}}^{[X]}(v|R)\\
        &\stackrel{(a)}{=} \sum_{n =0}^\infty e^{-\lambda^{[p]}\pi r_c^2}\left(\lambda^{[p]} \pi r_c^2\right)^n\frac{1}{n!} \\
        &\qquad \times\left(\int_0^{r_c+X}G_{\textrm{GPP}}^{[u]}(v|R)f_{R^{[p]}}(u|X)du\right)^n\\
        &\stackrel{(b)}{=} \!\exp\!\left(\!\lambda^{[p]}\pi r_c^2\left(\int_0^{r_c+X}\!G_{\textrm{GPP}}^{[u]}(v|R) f_{R^{[p]}}(u|X)du-1\right)\right)
    \end{split}
\end{align}}where $f_{R^{[p]}}(u|x)$ is used in $(a)$ to obtain the distance to a specific parent knowing the distance to the cluster center and $(b)$ is given by calculating the converging series. The expectation on $R$ can be applied by using the PDF \eqref{eq:fR} with $\beta \!=\! 1$ such that $G_{\textrm{MCP}}^{[X]}(v)\! =\! \int_{r_e}^{\tau}\!G_{\textrm{MCP}}^{[X]}(v|r)f_{R_0}^1(r) dr$.  The PGFL of a cluster in a GPP is given by $G_{\textrm{GPP}}^{[u]}(v|R)\! =\! (1-p^{[o]})v(u|R)+p^{[o]}v(u|R)\int_{-r_c}^{r_c}v(u+y|r)f^{[o]}(y)dy$. 

The expression of the PGFL of intracluster EMFE is very similar to~\eqref{eq:G_MCP_proof}, except that the summation starts at $n \!=\! 1$. With this change, the PGFL is $G_{\textrm{MCP}}^{[R^{[gp]}_0]}(v)\! =\! \int_{r_e}^{\tau}\!G_{\textrm{MCP}}^{[R^{[gp]}_0]}(v|r)f_{R_0}^1(r) dr$ with
{\small
\begin{multline}
        G_{\textrm{MCP}}^{[R^{[gp]}_0]}(v|R) 
        = \exp\!\left(-\lambda^{[p]}\pi r_c^2\right)\\
        \times\left(\exp\!\left(\!\lambda^{[p]}\pi r_c^2\int_0^{r_c+R^{[gp]}_0}\!G_{\textrm{GPP}}^{[u]}(v|R) f_{R^{[p]}}(u|R^{[gp]}_0)du\right)-1\right)\!.\!
\end{multline}}
The cluster is centered at $R^{[gp]}_0$, whose PDF is \eqref{eq:pdf_d_cluster} with $r_c > R^{[gp]}_0$. This finishes the proof of Theorem~\ref{eq:expucdf}.

\section{Proof of Theorem~\ref{th:M2}}
\label{sec:M2_proof}
     The first step consists of expanding $\mathcal{M}_{2}(T_e)$ by using Lemma~\ref{lem:CDF_cond} and distributing the expectation operator. This gives
 {\small
\begin{multline}\label{eq:M2_first}
    \mathcal{M}_{2}(T_{e}) = \frac{1}{4} - \mathbb E_{\Psi_b, \Phi^{[gp]}}\left[\int_{0}^{\infty} \textrm{\normalfont Im}\left[\phi_{\mathcal{P}^{tot}}(q|\Psi)\,e^{-jqT_{e}}\right]\frac{1}{\pi q}\,dq\right]\\
    + \mathbb E_{\Psi_b, \Phi^{[gp]}}\left[\frac{1}{\pi^2}\left(\int_{0}^{\infty} \textrm{\normalfont Im}\left[\phi_{\mathcal{P}^{tot}}(q|\Psi)\,e^{-jqT_{e}}\right]\,q^{-1}\,dq\right)\right.\\
    \left.\times\left(\int_{0}^{\infty} \textrm{\normalfont Im}\left[\phi_{\mathcal{P}^{tot}}(q'|\Psi)\,e^{-jq'T_{e}}\right]\,q'^{-1}\,dq'\right)\right].
\end{multline}}The second term in \eqref{eq:M2_first} corresponds to $\mathcal{M}_{1}(T_{e})-1/2$. By letting $\Omega(T_e)$ be the third term in \eqref{eq:M2_first}, \eqref{eq:M2} is obtained. 

The second step consists of interchanging the integrals and the expectation operator in the expression of $\Omega(T_e)$, so that it can be written as in \eqref{eq:Omega} with $
    \omega(T_e, q, q') \!= \!\mathbb E_{\Psi_b, \Phi^{[gp]}}[\textrm{\normalfont Im}[\phi_{\mathcal{P}^{tot}}(q|\Psi)e^{-jqT_{e}}]\textrm{\normalfont Im}[\phi_{\mathcal{P}^{tot}}(q'|\Psi) e^{-jq'T_{e}}]]$.

The third step focuses on the expansion of $\omega$. By using $\textrm{\normalfont Im}[x]\! = \!(x-\Bar{x})/{2}$ and $\textrm{\normalfont Re}[x]\! = \!({x+\Bar{x}})/{2}$ and by writing $b_1 \!=\! \exp(-jqT_{e})$ and $b_2\! = \!\exp(-jq'T_{e})$, it is obtained that $
    \omega(T_e, q, q')
    = \frac{1}{2}\,\textrm{\normalfont Re}\left[\gamma_+(q, q')\,b_1\,b_2\right]-\frac{1}{2}\,\textrm{\normalfont Re}\left[\gamma_-(q, q')\,b_1\,\Bar{b_2}\right]
$
with $\gamma_+(q, q')\! = \!\mathbb E_{\Psi_b, \Phi^{[gp]}}\left[\phi_{\mathcal{P}^{tot}}\left(q | \Psi\right)\,\phi_{\mathcal{P}^{tot}}\left(q' | \Psi\right)\right]$ and $\gamma_-(q, q') = \mathbb E_{\Psi_b, \Phi^{[gp]}}\left[\phi_{\mathcal{P}^{tot}}\left(q | \Psi\right)\,\Bar{\phi}_{\mathcal{P}^{tot}}\left(q' | \Psi\right)\right]$. 
The last step is the resolution of $\gamma_\pm(q, q')$.
By replacing the CFs by their expression given in \eqref{eq:cf_pd_cond} and \eqref{eq:cf_pu_cond}, by distributing the product and by using the PGFL, we obtain for $\gamma_+(q, q') = \gamma_+^d(q, q')\gamma_+^u(q, q')$ with
\begin{multline}
    \gamma_+^d(q, q') \\
    = \exp\!\left(2\pi \lambda^b\int_{r_e}^{\tau}\left(\frac{1}{1- j q P^d l(x)^d}\frac{1}{1- j q' P^d l(x)^d}-1\right)dx\right)
\end{multline}and
{\footnotesize
\begin{align}
    \begin{split}
        &\gamma_+^u(q, q') = \mathbb{E}_{\Psi_b, \Phi^{[gp]}}\left[\exp\!\left(-2\lambda^{[p]}\pi r_c^2\right)\right.\\
        &\left.\left(\exp\!\left(\!\lambda^{[p]}\pi r_c^2\int_0^{r_c+R^{[gp]}_0}\!G_{\textrm{GPP}}^{[u]}(v|R) f_{R^{[p]}}(u|R^{[gp]}_0)du\right)-1\right)\!\right.\\
        &\left.
        \times\left(\exp\!\left(\!\lambda^{[p]}\pi r_c^2\int_0^{r_c+R^{[gp]}_0}\!G_{\textrm{GPP}}^{[u]}(v'|R) f_{R^{[p]}}(u|R^{[gp]}_0)du\right)-1\right)\!\right.\\
        &\times\!\left.\!\prod_{i \in \Phi^{[gp]}\setminus{R^{[gp]}_0}}\!\exp\!\left(\!\lambda^{[p]}\pi r_c^2\int_0^{r_c+X_i}\!G_{\textrm{GPP}}^{[u]}(v|R) f_{R^{[p]}}(u|X_i)du\right)\right.\\
        &\left.\!\times \!\exp\!\left(-2\lambda^{[p]}\pi r_c^2\right)\exp\!\left(\!\lambda^{[p]}\pi r_c^2\int_0^{r_c+X_i}\!G_{\textrm{GPP}}^{[u]}(v'|R) f_{R^{[p]}}(u|X_i)du\right)\right]
    \end{split}
\end{align}}with $v'(q,u|R) = v(q',u|R)$. By applying the expectation on $R$ on each cluster using the PDF \eqref{eq:fR} with $\beta \!=\! 1$ on each cluster, then by applying the PGFL for the intercell part and PDF \eqref{eq:pdf_d_cluster} for the intracell part, and by rearraging the terms, \eqref{eq:gamma+u} is obtained. The same reasoning can be applied to $\gamma_{-}(q,q')$.

}


\bibliographystyle{IEEEtran}
\bibliography{bibli}

\vfill
\end{document}